\tikzstyle{decision} = [diamond, draw, fill=white, text width=4.5em, text badly centered, node distance=2.5cm, inner sep=0pt]
\tikzstyle{block} = [rectangle, draw, fill=white, text width=5em, text centered, rounded corners, minimum height=4em]
\tikzstyle{block_client} = [rectangle, draw, fill=green, text width=5em, text centered, rounded corners, minimum height=4em]
\tikzstyle{line} = [draw, -latex']
\tikzstyle{cloud} = [draw, ellipse,fill=white, node distance=2cm, minimum height=2em]
\newcounter{protocol}
\newenvironment{protocol}[1][htb]{%
  \let\c@algorithm\c@protocol
  \renewcommand{\ALG@name}{Protocol}
  \begin{algorithm}[#1]%
  }{\end{algorithm}
}
\newcounter{functionality}
\newenvironment{functionality}[1][htb]{%
  \let\c@algorithm\c@functionality
  \renewcommand{\ALG@name}{Functionality}
  \begin{algorithm}[#1]%
  }{\end{algorithm}
}
\newcommand{\A}{\mathcal{A}}
\newcommand{\B}{\mathcal{B}}
\newcommand{\E}{\mathcal{E}}
\newcommand{\inp}{\mathrm{in}}
\newcommand{\out}{\mathrm{out}}
\newcommand{\Ss}{\mathcal{S}}
\newcommand{\F}{\mathcal{F}}
\newcommand{\C}{\mathcal{C}}
\newcommand{\U}{\mathcal{U}}
\newcommand{\G}{\mathcal{G}}
\newcommand{\att}{\text{att}}
\newcommand{\Z}{\mathcal{Z}}
\newcommand{\prog}{\texttt{prog}}
\newcommand{\mem}{\texttt{mem}}
\newcommand{\mpk}{\texttt{mpk}}
\newcommand{\msk}{\texttt{msk}}
\newcommand{\getpk}{\texttt{getpk}}
\newcommand{\install}{\texttt{install}}
\newcommand{\resume}{\texttt{resume}}
\newtheorem{hybrid}{Hybrid}
\author{Yao Ma\inst{1,3} \and Elham Kashefi \inst{1,2} \and Myrto Arapinis \inst{2} \and Kaushik Chakraborty \inst{2} \and Marc Kaplan \inst{3}}
\institute{
  Laboratoire d’Informatique de Paris 6 (LIP6), Sorbonne Université, Paris, France \email{yao.ma@lip6.fr}
  \and
  School of Informatics, University of Edinburgh, Edinburgh, UK, \email{ekashefi@inf.ed.ac.uk, marapini@inf.ed.ac.uk, kchakrab@exseed.ed.ac.uk}
  \and
  VeriQloud, Montrouge, France, \email{kaplan@veriqloud.fr}
}
\title{QEnclave - A practical solution for secure quantum cloud computing}
\begin{document}

\maketitle

\keywords{Secure Enclave \and Abstract Cryptography \and Quantum Cryptography \and Delegated Quantum Computation \and Cloud Platform}

\begin{abstract}
We introduce a secure hardware device named a QEnclave that can secure the remote execution of quantum operations while only using classical controls. This device extends to quantum computing the classical concept of a secure enclave which  isolates a computation from its environment to provide privacy and tamper-resistance. Remarkably, our QEnclave only performs single-qubit rotations, but can nevertheless be used to secure an arbitrary quantum computation even if the qubit source is controlled by an adversary. More precisely, attaching a QEnclave to a quantum computer, a remote client controlling the QEnclave can securely delegate its computation to the server solely using classical communication.

We investigate the security of our QEnclave by modeling it as an ideal functionality named \emph{Remote State Rotation}. We show that this resource, similar to previously introduced functionality of remote state preparation, allows blind delegated quantum computing with perfect security. Our proof relies on standard tools from delegated quantum computing. Working in the Abstract Cryptography framework, we show a construction of remote state preparation from remote state rotation preserving the security.

An immediate consequence is the weakening of the requirements for blind delegated computation. While previous delegated protocols were relying on a client that can either generate or measure quantum states, we show that this same functionality can be achieved with a client that only transforms quantum states without generating or measuring them.

It is known that blind quantum computing with information security cannot be implemented using only classical communication. Computational assumptions that circumvent this impossibility induce large overheads on the server side that prevent their practical use. Whereas our approach based on QEnclave does not increase the complexity of the problem on the server side while relying on hardware assumptions that are already used in practice for classical computations. It hence provides for the first time a feasible path towards achieving secure delegated quantum computing for various hardware platforms that are already providing their services over the cloud.  

\end{abstract}

\section{Introduction}
\label{sec:main}

Quantum computing is an emerging field of computation technology that  promises to produce faster algorithms for solving computational problems~\cite{shor_polynomial-time_1997,aharonov_polynomial_2006}. Many government agencies and large companies like Google, IBM and Amazon are putting efforts on building a programmable quantum device that can outperform existing classical computers~\cite{arute_quantum_2019,fisher_ibm_2019}. Some of them have already managed to develop small-scale quantum computers and provide cloud services allowing users to delegate their quantum computations~\cite{alsina_experimental_2016,devitt_performing_2016,hebenstreit_compressed_2017,wang_16-qubit_2018}. 

Although this form of \emph{delegated quantum computation} (DQC) services are very useful in practice, for education and research for example, running algorithms on untrusted quantum hardware raises important privacy issues. A major challenge of DQC is to ensure the privacy of the client's computation who doesn't have any quantum computation capability. In this paper, we address this issue 
by introducing a new quantum hardware assumption, namely \emph{quantum trusted execution environment} (Quantum TEE) and showing how it can be used to implement privacy-preserving DQC, even with fully classical client. 

In the classical world, a \emph{Trusted Execution Environment} (TEE) is a secure area that executes code in an isolated environment and prevents malicious access from the rest of the device. The standardization was initially proposed by Global Platform to ensure the protection of stored application and data~\cite{globalplatform_tee_2018}. In practice, a TEE is designed to isolate the trusted execution of software layer from the untrusted area, also called \emph{rich execution environment} (REE). 
It is based on a combination of hardware architecture and cryptographic protection. It allows to control the flow of information between applications in multiple environments with different root-of-trust. In more advanced scenarios, TEEs have been used for blockchain~\cite{lind_teechain_2019}, privacy-preserving machine learning~\cite{grover_privado_2019, ohrimenko_oblivious_2016}, or cloud services~\cite{baumann_shielding_2015, schuster_vc3_2015}. 

The goal of \emph{delegated computation} is to allow a computationally bounded client to assign some computation to a computationally powerful but untrusted server while maintaining the privacy of data. This is relevant especially in the case of high performance computing in the cloud.
A similar question arises with universal quantum computers becoming available in the near future. Even though we have recently been witnessing spectacular developments, it is expected that scalable quantum computers will remain hard to build and expensive for a long time. It is very likely that they will only be accessible remotely, exactly like supercomputers are nowadays. In this context, DQC enables a client with limited quantum capabilities to delegate a computation to a quantum server while maintaining the correctness and privacy of the computation. 

The first efficient universal protocol for secure (blind) delegated quantum computation was introduced in~\cite{broadbent_universal_2009} see recent reviews for other similar protocols~\cite{fitzsimons2017private, gheorghiu2019verification}. However, these protocols all assume a quantum channel between the client and the server, which for some quantum hardware platform such as superconducting or cold atom qubits, might prove to be impractical at least in near future. For this reason, the construction of efficient, private and secure DQC protocol using only classical communication will be extremely important. Given the impossibility result on achieving informationally secure delegated computing using only classical communication~\cite{Aaronson2019ComplexityTheoreticLO} other assumption has to be considered. Recent breakthroughs based on post-quantum secure trapdoor one-way functions, paved the way for developing entirely new approaches towards fully-classical client protocols for emerging quantum servers~\cite{urmila_qfhe,mahadev_classical_2018,cojocaru_qfactory_2019}. Nevertheless, the challenge for these protocols is the huge server overhead. This is due to the fact that one has to ensure the quantum circuit implementing the required masking protocol based on the learning with error (LWE) encryption~\cite{Regev} remains unhackable both classically and quantumly. That leads to current proposals that require order of $1000$ server qubits for masking a single gate of the target client computation. 

In our work we explore a new approach based on hardware security assumption to derive practical secure DQC protocol with fully classical client setting. We explore the modular approach introduced in~\cite{dunjko_blind_2016} that defines the \emph{Remote State Preparation} (RSP) as the main building blocks for DQC protocol. It worth noting that in~\cite{gheorghiu_vidick2019} an RSP protocol was also proposed using a classical channel between client and server but assuming a new resource called \emph{Measurement Buffer}, which externalize a quantum state measurement from the server side. However, such a resource can not be realized classically as it was proven in~\cite{badertscher_security_2020}. Indeed it is known that it is impossible to construct a composable secure RSP protocol using only a classical channel between the client and the server without any hardware assumption, which confirms our approach to be the only way forward to construct an efficient DQC protocol with a classical client from the RSP module. One could also take a different approach to define a hardware security module that implement securely the measurement buffer (on the server side) and then use the protocol introduced in~\cite{gheorghiu_vidick2019}. However there are two fall-backs for such protocol. First, it is desired that the hardware assumption should be as simple as possible and as we discuss later securing the measuring device leads to an unnecessary complicated architecture. A more severe issues however is that, as mentioned before due to the usage of an LWE-based encryption, the protocol of~\cite{gheorghiu_vidick2019} requires a huge overhead on the server side. 

With these constraints in mind, we introduce our Quantum TEE, called \emph{QEnclave}, as a practical way to make DQC secure with a classical client. Remarkably only one call to our simple hardware module is enough to create one remote blind qubit. Up to our knowledge this is the first time that a realistic construction for a quantum trusted execution environment has been proposed. Our QEnclave only transforms single qubit states without generating or measuring them. Nevertheless, it can be composed with the universal blind quantum computing protocol of~\cite{broadbent_universal_2009} to achieve secure DQC with perfect blindness (assuming minimal hardware assumption) whiles using only classical communication between the client and the server with optimal server overhead. Surprisingly, the blindness of the protocol holds even if the server controls the qubit source.

The contributions of our work are twofold. The first one is the introduction of a new ideal functionality named \emph{Remote State Rotation} (RSR). The only operation performed by this functionality is to rotate a quantum state with arbitrary angles chosen uniformly at random from a fixed set. In practice, RSR reduces the client's quantum technology requirements compare to previously proposed RSP resources usually requiring the client generating or measuring quantum states. This makes this functionality of independent interest for the study of practical quantum cryptographic protocols, specially that we show how to build RSP from RSR in the \emph{Abstract Cryptography} framework~\cite{maurer_abstract_2011}. In combination with previous results on the security of RSP, it implies that a classical client, using RSR, can achieve DQC with perfect blindness solely relying on classical communication even if the source that generates the state is compromised. However our protocol as it stands does not admit verifiability, and we leave it as an open question whether RSR could also provide verifiable DQC without adding any extra assumptions. 

Our second contribution consists in a proposal to build our QEnclave using a standard classical TEE, together with a protection of the flow between TEE and the quantum apparatus which implements the single-qubit rotations. While such a protection might appear like a strong assumption, it is in practice similar to the requirements for standard classical hardware security modules. For example, the FIPS-140 requirements for cryptographic modules grade the security depending on the guarantees they provide against tampering the hardware. For completeness, we also give a concrete practical classical protocol for the establishment of a post-quantum secure channel between the client and the TEE. Since the  TEE is assumed to guarantee the secure processing of the classical secret decryption, it overall leads to a proposed implementation of a blind DQC protocol using a TEE in our QEnclave.

The rest of the paper is organized as follows: in Section~\ref{sec:prel}, we recall the basic concepts and notations used in our work; in Section~\ref{sec:RSP}, we introduce the functionalities for RSP used in our construction and discuss their composable security in the abstract cryptography framework; in Section~\ref{sec:RSR}, we propose a new ideal functionality that models the QEnclave called remote sate rotation, show how to use it to build a blind DQC protocol and prove the security of the construction;
in section~\ref{sec:QEnclave}, we propose a complete specification of the QEnclave based on the use of a classical TEE, with a protocol to construct a blind DQC protocol with QEnclave.
Finally we conclude our paper with a discussion in Section \ref{sec:concl}. In particular, we discuss how our QEnclave could lead to a verifiable UBQC protocol~\cite{fitzsimons_unconditionally_2017}, and other potential applications of our QEnclave.


\section{Preliminaries}
\label{sec:prel}
\subsection{Trusted Execution Environment}
A TEE is a tamper-resistant processing environment that runs on a kernel~\cite{sabt_trusted_2015} separated from its environment, named the rich execution environment (REE). It guarantees the authenticity of the executed code, the integrity of the run-time states, and the confidentiality of its code and data. It can also provide remote attestations of its trustworthiness to third parties. A TEE should resist against all software attacks as well as physical attacks performed on the main memory of the system. On the one hand, the OS and most of the applications are executed in the REE and as such might be easily tampered by virus, trojans, malware, tools of rooting/reflashing, keystrokes logging, etc. On the other hand, running applications in the TEE is less efficient than on the REE.

There are many ways to implement a TEE in practice~\cite{gonzalez_operating_2015}. The smartcards we use daily are a prototype of TEE with the smartcard itself being the trusted area while the peripherals (e.g., POS terminal) do not need to be trusted~\cite{koemmerling_design_1999}. Smartcards are completely isolated, providing high levels of trust, but are also very limited due to their size.
A second type of familiar TEE is the \emph{trusted platform module} (TPM)~\cite{trusted_computing_group_tpm_2019}. A TPM is a co-processor specialized for cryptograhic tasks, including key generation, encryption, decryption, \emph{etc}. The trusted components should include isolated engines for cryptography (e.g., SHA-1 engine, RSA engine, HMAC engine, etc.) and a random number generator. In addition, a TPM includes an isolated execution engine, platform configuration registers and persistent memory for identification. 

Apart from smartcards and TPMs, there exists another type of TEE that consists in the design of processors with different execution environments and allowing inter-communication among environments with flow control (Figure~\ref{fig:TEE}). Intel SGX, for example, allows users to instantiate a secure \emph{enclave} to protect an application~\cite{intel_intel_2016}. The code from outside the enclave cannot alter the application inside the enclave, even if executed with high privileges. Intel SGX also includes security measures such as remote attestation, crypto-based memory protection, sealing, etc.
Another example is ARM TrustZone, which is implemented in most ARM processors nowadays. The system bus with ARM TrustZone uses an extended protected NS bit to distinguish the instructions of the trusted area from the untrusted area~\cite{arm_arm_2009}. ARM TrustedZone can also protect specific peripherals by hiding them from unstruted applications.
\begin{figure}[h!]
        \centering
        \includegraphics[width=8cm]{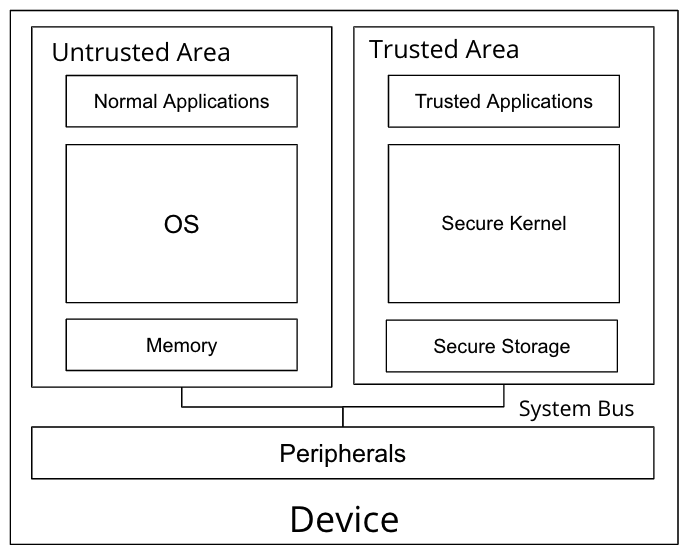}
        \caption{TEE with co-existing execution environments}
        \label{fig:TEE} 
\end{figure}

We introduce a feature that is important for our discussion: \emph{remote attestation}. Remote attestation is a mechanism that allows proving the TEE integrity of a \emph{prover} to a \emph{verifier}. It provides an attestation signed by the TEE manufacturer. For instance, consider a client (verifier) aiming to delegate some application to the TEE of the server (prover). The client can challenge the server to provide him with an attestation signed specifically by the TEE manufacturer and allowing the establishment of an authenticated channel between the client and the TEE, before running a trusted application. The identity and hash of the TEE is a proof of integrity, signed with a hard-coded built-in private key~\cite{sailer_attestation-based_2004}. The proof sent back by the server allows the client to verify the authenticity of the attestation message. Once the attestation is verified, the trusted application runs securely inside the TEE. It also allows anonymous attestation, where a user can verifies an attestation is generated by a valid enclave without identifying which one. The remote attested execution schemes are given in previous works~\cite{barbosa_foundations_2016,coron_formal_2017} to capture the properties of enclave-like secure processors in the real world.

We exploit the abstraction of anonymous attested execution (See Functionality \ref{resource:g_att}) as introduced in~\cite{coron_formal_2017} to formalize cryptographically the secure processors. $\G_\att$ is parameterized by a signature scheme $\Sigma$ and a register \texttt{reg} that captures all parties $\mathcal{P}$ that equip with a secure enclave. For the activation points of $\G_\att$, the starred ones are reentrant activation points, otherwise it can be only executed once. In registry stage, the secure processor enables a distribution of manufacturer public key of key pair $(\mpk, \msk)$ to $\mathcal{P}$ upon query. For enclave operations, the activation point $\texttt{install}$ denotes a new installation of enclave application with a program $\prog$ from $\mathcal{P}$, it generates an identifier \emph{eid} to $\mathcal{P}$ for identifying the enclave instance; the activation point \texttt{resume} enables the execution of $\prog$ upon the input $\texttt{inp}$ by $\G_\att$. $\G_\att$ then signs the output $\texttt{outp}$ to be attested with $\msk$ using $\Sigma$. The attestation $\sigma$ is returned to $\mathcal{P}$ for verification.
\begin{functionality}
    \caption{Anonymous Attested Execution $\G_\att[\Sigma,\texttt{reg}]$}
    \textbf{Registry:}
    
    \vspace{.2cm}
    \quad\textcolor{gray}{// initialization}
        
    \quad On initialize: $(\mpk, \msk):=\Sigma.\texttt{KeyGen}(1^\lambda), T=0$
    
    \vspace{.2cm}
    \quad\textcolor{gray}{// public query interface}
        
    \quad On receive* $\getpk()$ from some $\mathcal{P}$: send $\mpk$ to $\mathcal{P}$
    
    \vspace{.2cm}    
    \textbf{Enclave Operations:}
    
    \vspace{.2cm}
    \quad\textcolor{gray}{//install an enclave program}
    
    \quad On receive* $\install(idx,\prog)$ from some $\mathcal{P}\in \texttt{reg}:$
        
    \quad\quad if $\mathcal{P}$ is honest, assert $idx=sid$
        
    \quad\quad generate nonce $eid\in\{0,1\}^\lambda$, store $T[eid,\mathcal{P}]:=(eid, \prog,0)$, send $eid$ to $\mathcal{P}$
    
    \vspace{.2cm}   
    \quad\textcolor{gray}{//resume an enclave program} 
    
    \quad On receive* $\resume(eid,\texttt{inp})$ from some $\mathcal{P}\in \texttt{reg}:$
    
    \quad\quad let $(idx,\prog,\mem):=T[eid,\mathcal{P}]$, abort if not found
        
    \quad\quad let $(\texttt{outp},\mem):=\prog(\texttt{inp},\mem)$, update $T[eid, \mathcal{P}]:=(idx,\prog,\mem)$
        
    \quad\quad $\sigma:=\Sigma.\texttt{Sig}_\msk(idx,eid,\prog,\texttt{outp})$, and send $(\texttt{outp},\sigma)$ to $\mathcal{P}$
    \label{resource:g_att}
\end{functionality}

\subsection{Quantum Tools}
We introduce the basic concepts required here. Interested readers can refer to standard textbooks on this topic~\cite{nielsen_quantum_2010}.
In quantum computation, a quantum bit or (\emph{qubit}) is a quantum system that is the analogue of a classical bit. It lives in a two-dimensional Hilbert space~$\mathcal{H}$. In particular, the qubits of the \emph{computational basis} of~$\mathcal{H}$ are denoted:
\begin{align*}
    \ket{0}=
    \begin{bmatrix}
    1 \\
    0
    \end{bmatrix},
    \hspace{1cm}
    \ket{1}=
    \begin{bmatrix}
    0 \\
    1
    \end{bmatrix}.
\end{align*}
More generally, the state of an arbitrary qubit is described as \(\ket{\psi}=\alpha\ket{0}+\beta\ket{1}\) where \(\abs{\alpha}^{2}+\abs{\beta}^{2}=1\). 
An alternative basis called the \emph{Hadamard basis} consists in the following qubits:
\begin{align*}
    \ket{+}=\frac{1}{\sqrt{2}}
    \begin{bmatrix}
    1 \\
    1
    \end{bmatrix},
    \hspace{1cm}
    \ket{-}=\frac{1}{\sqrt{2}}
    \begin{bmatrix}
    1 \\
    -1
    \end{bmatrix}.
\end{align*}
We will in particular make use of the transform $Z(\theta)$ that maps $\ket{\pm}$ to
\(\ket{\pm_{\theta}}=\frac{1}{\sqrt{2}}(\ket{0}\pm e^{i\theta}\ket{1})\). 

A quantum state can also be described by its \emph{density matrix} \(\rho=\ketbra{\psi}{\psi}\). Density matrices also capture mixed states of the form $\rho=\Sigma_{s}p_{s}\ketbra{\psi_{s}}{\psi_{s}}$ where $p_{s}$ is a probability over pure states $\ketbra{\psi_{s}}{\psi_{s}}$. 

For multiple qubit systems, two states $\ket{v}$ and $\ket{w}$ in two Hilbert spaces $V$ and $W$ with dimension $n$ and $m$ can be assembled as \(\ket{v}\otimes\ket{w}\), or simply $\ket{vw}$, which lives in
$V\otimes W$, a $n\cdot m$ dimensional Hilbert space. A quantum system $\ket u$ is called \emph{separable} if it can be written $\ket v \otimes \ket w$. A multiple qubit system that is not separable is \emph{entangled}. For example, \(\ket{\phi}=\frac{1}{\sqrt{2}}(\ket{00}+\ket{11})\) is an entangled state.

The measurement of a quantum state is defined by a set of operators $\{M_{i}\}$ satisfying $\sum_i M_{i}^{\dagger}M_{i} = I$ with its conjugate transpose operator $M^{\dagger}$. The probability of getting measurement result $i$ on quantum state $\ket{\psi}$ is:
\begin{align*}
    P(i)=\bra{\psi}M_{i}^{\dagger}M_{i}\ket{\psi}=\bra{\psi}M_{i}\ket{\psi}.
\end{align*}
In particular, if $\mathcal B = \{\ket u, \ket v\}$ is a basis of qubit states, then the measurement defined by the operators $\{\ketbra{u}{u}, \ketbra{v}{v} \}$ is usually referred to as a projection onto basis $\mathcal B$.

The transformation of a quantum state can be described by a \emph{unitary} operator $U$. These transforms preserve the norm of a vector and hence map a quantum state onto another quantum state.

We use letters $X/Y/Z$ to denote some particular unitary operators called \emph{Pauli operators}. For single-qubit, the Pauli operators, as well as identity $I$, are given in the following matrices:
\begin{align*}
    I = \begin{pmatrix} 1 & 0 \\ 0 & 1 \end{pmatrix},
    X = \begin{pmatrix} 0 & 1 \\ 1 & 0 \end{pmatrix},
    Y = \begin{pmatrix} 0 & i \\ -i & 0 \end{pmatrix}, 
    Z = \begin{pmatrix} 1 & 0 \\ 0 & -1 \end{pmatrix}.
\end{align*}

The other operators relevant here are the \emph{Hadamard} ($H$) gate, which maps the computational basis to the Hadamard basis, and the \emph{control-U} ($CU$) gates, which uses two qubits as input: a control qubit and target qubit. It operates $U$ on the target qubit when the control qubit is set to $\ket 1$. 

Finally, we give a very brief introduction to a model of quantum computation called measurement-based quantum computation (MBQC), originally proposed by Raussendorf and Briegel~\cite{raussendorf_one-way_2001, raussendorf_measurement-based_2003, briegel_measurement-based_2009}. The DQC protocols discussed in our work are well described in the MBQC computation model. In this model, a computation is described by a series of commands involving single-qubits or two qubits: preparations of single-qubits in the state $\ket{+}$; entanglements of two qubits with the $CZ$ operator; measurements on single-qubits with basis $\ket{+_{\theta}}$ and $\ket{-_{\theta}}$ with measurement results (\emph{signals}) $0$ and $1$ respectively, corrections on single-qubits with operators $X$, $Z$ depending on signals~\cite{danos_measurement_2007}.

The entangled state used for computation in MBQC is called a \emph{graph state}. An MBQC computation is a sequence of commands on a graph state that includes a subset of input qubits and output qubits. In the family of graph state, \emph{cluster states} introduced in~\cite{raussendorf_measurement-based_2003} and \emph{brickwork states} introduced in~\cite{broadbent_universal_2009} are proved to be universal for MBQC.

\subsection{Abstract Cryptography}
The \emph{Abstract cryptography} (AC, also called Constructive Cryptography) framework was introduced in~\cite{maurer_abstract_2011} by Maurer and Renner for getting composable security properties. 
Compared to UC framework~\cite{959888,10.1007/978-3-540-70936-7_4} that is built in bottom-up approach, AC framework is formalized with top-down approach, where it considers the highest level of abstraction first, then the lower levels to instantiate particular objects of protocol. UC can be realized by instantiating the abstraction of AC framework. However, it is not our goal to compare different approaches in this paper, but to show the idea behind composable security.
Both these frameworks provide a method to establish the comparison of similarity between different functionalities, and further define the composable security of these functionalities. 

In AC framework, the functionality is called a \emph{resource}. A resource has some \emph{interfaces} $\mathcal{I}$ corresponding to the parties that the resource interacts with. Since we focus on two-party communication between the client and the server as our protocol, our resources have two interfaces \(\mathcal{I=\{\mathcal{A},\mathcal{B}\}}\) corresponding to the client and the server.

A protocol \(\pi=\{\pi_{i}\}_{i\in\mathcal{I}}\) is a set of \emph{converters} indexed by $\mathcal{I}$. A converter has two interfaces - an inside interface and an outside interface, where the inside interface is connected to the resource and the outside interface is connected to the outside world. 
Intuitively, a dishonest party in a protocol has more access to the functionalities of a resource than an honest one. We denote by $\perp$ a filter used to enforce the honest behaviour of a party. In this case, the functionalities accessed by the party are so-called the \emph{filtered functionalities}. 

An important concept of the AC framework is the \emph{distinguisher} ($\mathcal{D}$), which measures the distance between two resources. For instance, consider a resource $\mathcal{R}$ and a protocol \({\pi_{A},\pi_{B}}\), and denote \(\pi_{A}\mathcal{R}\pi_{B}\) their composition. We say that two resources \({\mathcal{R},\mathcal{S}}\) are $\varepsilon$-closed, or \(\mathcal{R}\approx_{\varepsilon}\mathcal{S}\) if there is no distinguisher $\mathcal{D}$ that can distinguish between $\mathcal{R}$ and $\mathcal{S}$ with advantage greater that $\varepsilon$. If $\varepsilon$ is negligible, we say that we can construct $\mathcal{S}$ from $\mathcal{R}$ with the protocol \({\pi_{A},\pi_{B}}\). Furthermore, if the resource $\mathcal{S}$ is secure, we say that the resource $\mathcal{R}$ securely constructs $\mathcal{S}$.
The following definition formally defines this.

\begin{definition}
\textnormal{(See~\cite{maurer_abstract_2011})} Given two resource $\mathcal{R}$ and $\mathcal{S}$, we say that a protocol \(\pi=\{\pi_{A},\pi_{B}\}\) constructs $\mathcal{S}$ from $\mathcal{R}$ 
within $\varepsilon$ if two following properties are satisfied:
\begin{itemize}
    \item Correctness:\\
    \begin{align}
        \pi_{A}\mathcal{R}\pi_{B}\approx_{\varepsilon}\mathcal{S}\perp,
    \end{align}
    \item Security: if there exists a converter, where it is called a \emph{simulator} $\sigma$ such that\\
    \begin{align}
        \pi_{A}\mathcal{R}\approx_{\varepsilon}\mathcal{S}\sigma.
    \end{align}
\end{itemize}
We denote this:
\begin{align}
    \mathcal{R}\xrightarrow{\pi,\varepsilon}\mathcal{S}
\end{align}
\label{defn:ac} 
\end{definition}

\subsection{Delegated Quantum Computing}
In a client-server  delegated quantum computation (DQC) protocol, a client with limited computational power asks a server to run a quantum computation, whose result is then returned to the client. There exist two types of  DQC protocols. The first ones are \emph{prepare-and-send} protocols, in which the client prepares a certain numbers of quantum states and send them to the server. The second class is \emph{receive-and-measurement} protocols~\cite{hayashi_verifiable_2015}, where the client receives single-qubits from the server and measures them. 

When delegating its computation, a client expects some security guarantees. The first one is blindness, which means that the server does not learn anything about the computation, input and output. The second one is verifiability, which means that a client can verify the correctness of the the result returned by the server.

\subsubsection{Ideal Functionalities of DQC}
The following definition from~\cite{dunjko_composable_2014} specifies an ideal resource for two-party delegated quantum computing with blindness.
\begin{definition}
\textnormal{(See Figure \ref{fig:DQC}(a))} 
For a given unitary $\mathcal U$, the ideal resource for DQC $\mathcal{S}^{blind}$ provides both correctness and blindness. It takes an input $\psi_{A}$ at the client's interface, and on the server's interface, a filtered control bit $c$ (set by default to 0) and a pair that consists in a state $\psi_{B}$ and a description of a CPTP map $\mathcal{E}$ \footnote{A CPTP map $\mathcal{E}$ is a generalization of unitary operators to density matrices. It
can always be described as a linear combination of \emph{Kraus operators} \(\{E_{k}\}\), which can be written as \(E_{k}=\Sigma_{i}\alpha_{i}\sigma_{i}\), where $\alpha_{i}$ is a complex number and $\sigma_{i}$ is a Pauli operator}. It outputs the allowed leak $\ell^{\psi_{A}}$ at the server's interface. If $c=0$, it outputs the correct result $\mathcal{U}(\psi_{A})$ at the client's interface; otherwise it outputs the server's choice, $\mathcal{E}(\psi_{AB})$
\label{defn: BQC}
\end{definition}

The blindness means there is at most $\ell^{\psi_{A}}$ of information leaked to the server during the interactions. The other property of DQC that we are interested in is verifiability. This means that if a dishonest server returns an incorrect result, the probability that the client accepts it is negligible.
The following definition formalizes the definition of verifiable DQC.
\begin{definition}
\textnormal{(See Figure \ref{fig:DQC}(b))} 
For a given unitary $\mathcal U$, the ideal resource DQC resource $\mathcal{S}_{verif}^{blind}$ provides correctness, blindness and verifiability. It takes an input $\psi_{A}$ at the client's interface, and a filtered control bits $c$ (set by default to 0) at the server's interface. It outputs the allowed leak $\ell^{\psi_{A}}$ at the server's interface. If $c=0$, it simply outputs $\mathcal{U}(\psi_{A})$ at the client's interface. If $c=1$, it outputs an error message at the client's interface. 
\label{defn: VBQC}
\end{definition}

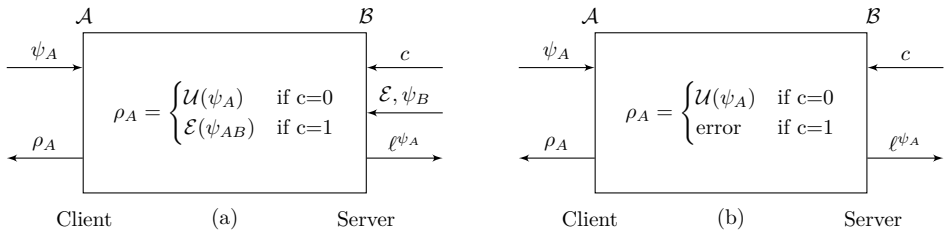
\begin{figure}[h!]
    \centering
    \begin{tikzpicture}[every node/.style = {rectangle, align=center,scale=0.8}, scale=0.8]
    \node(blindness)[draw, inner xsep=5mm, inner ysep=8mm, minimum size=1.5cm]{
    $\rho_A=
    \begin{cases}
    \U(\psi_A)  &\text{if c=0}\\
    \E(\psi_{AB}) &\text{if c=1}
    \end{cases}$};
    \node(veri)[draw, inner xsep=5mm, inner ysep=8mm, minimum size=1.5cm, right=3cm of blindness]{
    $\rho_A=
    \begin{cases}
    \U(\psi_A)  &\text{if c=0}\\
    \text{error} &\text{if c=1}
    \end{cases}$};
    \node(a) [below = 1mm of blindness] {(a)};
    \node(b) [below = 1mm of veri] {(b)};
    \node(client_a) [left = 1.1cm of a] {Client};
    \node(server_a) [right = 1.1cm of a] {Server};
    \node(client_b) [left = 1.1cm of b] {Client};
    \node(server_b) [right = 1.1cm of b] {Server};
    \node(interface_a1) [above = 2.3cm of client_a] {$\A$};
    \node(interface_a2) [above = 2.3cm of server_a] {$\B$};
    \node(interface_b1) [above = 2.3cm of client_b] {$\A$};
    \node(interface_b2) [above = 2.3cm of server_b] {$\B$};
    \coordinate[left = of blindness] (a0);
    \coordinate[above = 6mm of a0] (a1);
    \coordinate[below = 6mm of a0]   (a2);
    \coordinate[right = of blindness] (a3);
    \coordinate[above = 6mm of a3] (a4);
    \coordinate[below = 6mm of a3] (a5);
    \draw[-latex'] (a1) -- node[pos=0.5, above] {$\psi_A$} (a1-| blindness.west);
    \draw[-latex'] (a2-| blindness.west) -- node[pos=0.5, above] {$\rho_A$} (a2);
    \draw[-latex'] (a3) -- node[pos=0.5, above] {$\E,\psi_B$} (a3-| blindness.east);
    \draw[-latex'] (a4) -- node[pos=0.5, above] {$c$}  (a4-| blindness.east);
    \draw[-latex'] (a5-| blindness.east) -- node[pos=0.5, above] {$\ell^{\psi_A}$} (a5);
    \coordinate[left = of veri] (b0);
    \coordinate[above = 6mm of b0] (b1);
    \coordinate[below = 6mm of b0]   (b2);
    \coordinate[right = of veri] (b3);
    \coordinate[above = 6mm of b3] (b4);
    \coordinate[below = 6mm of b3] (b5);
    \draw[-latex'] (b1) -- node[pos=0.5, above] {$\psi_A$} (b1-| veri.west);
    \draw[-latex'] (b2-| veri.west) -- node[pos=0.5, above] {$\rho_A$} (b2);
    \draw[-latex'] (b4) -- node[pos=0.5, above] {$c$}  (b4-| veri.east);
    \draw[-latex'] (b5-| veri.east) -- node[pos=0.5, above] {$\ell^{\psi_A}$} (b5);
    \end{tikzpicture}
    \caption{DQC ideal resources with blindness (a) and DQC ideal resources with both blindness and verifiability (b)}
    \label{fig:DQC} 
\end{figure}

\subsubsection{Universal Blind Quantum Computation}
Universal delegated quantum computation (UBQC), originally introduced in~\cite{broadbent_universal_2009}, is a quantum computation model whose operations can easily be described in the MBQC model. At the start of a UBQC protocol, the client produces a sequence of single-qubit states of the form \(\ket{+_{\theta}}\) with  \(\theta\) chosen uniformly at random from \(\{0,\frac{\pi}{4},\dots\frac{7\pi}{4}\}\). Here and throughout   this paper, we use $\{\mathbb{Z}_{\frac{\pi}{4}}\}$ to denote this set of angles.
After receiving $N$ such qubits from the client through a quantum channel, the server entangles them to build a brickwork state.

The computational stage is interactive, and uses only classical communication. During this stage, the client continuously sends the measurement angle for each qubit to the server, who returns the measurement result to the client. The client then computes the following measurement angle. At the end of the computation, the server returns the quantum outputs to the client. Dunjko, Fitzsimons, Portmann and Renner~\cite{dunjko_composable_2014}  showed the security of a UBQC protocol providing perfect blindness in the AC framework. 

For completeness, we provide a general description of UBQC in Protocol~\ref{protcol:UBQC}. More details can be found in~\cite{broadbent_universal_2009}.
\begin{protocol}[!h]
\caption{Universal Blind Quantum Computation}
\begin{enumerate}
    \item \textbf{Client's preparation:}
    \begin{enumerate}
        \item The client prepares a unitary map $\mathcal{U}$ described as a pattern on a brickwork state. $\mathcal{G}_{n \times m}$.
        \item The client generates $n\times m$ qubits in states \(\ket{\psi_{x,y}}\in_{R}\{\ket{+_{\theta_{x,y}}}=\frac{1}{\sqrt{2}}(\ket{0}+e^{i\theta_{x,y}}\ket{1})\ |\ \theta_{x,y}\in\{\mathbb{Z}_{\frac{\pi}{4}}\}\}\), with measurement angles $\phi_{x,y}$,  dependency sets $X_{x,y}$ and $Z_{x,y}$ obtained from the flow construction~\cite{danos_determinism_2006} corresponding to $\mathcal{U}$. The qubits are sent to the server.
    \end{enumerate}
    \item \textbf{Server's preparation:}
    \begin{enumerate}
        \item The server creates $n$ qubits in the $\ket{+}$ state to use as the final output layer.
        \item The server creates an entangled state using both the qubits received from the client and the output layer qubits by applying ctrl-$Z$ gates between the qubits in order to create a brickwork state $\mathcal{G}_{n \times m+1}$.
    \end{enumerate}
    \item \textbf{Interactions and Measurement:}\\
    For each column \(x=1,\dots,m\)
    
    $\>$ For each row \(y=1,\dots,n\)
    \begin{enumerate}
        \item The client computes the updated measurement angle $\phi_{x,y}^{\prime}$, which depends on previous measurement outcomes reported by the server, and some random choices $r_{x,y}$ made by the client to hide the measurement angles.
        \item The client chooses a binary value \(r_{x,y}\in\{0,1\}\) uniformly at random, and computes \(\delta_{x,y}=\phi_{x,y}+\theta_{x,y}+\pi r_{x,y}\).
        \item The client sends $\delta_{x,y}$ to the server, who performs a measurement in the basis \(\{\ket{+_{\delta_{x,y}}},\ket{-_{\delta_{x,y}}}\}\).
        \item The server sends the result \(s_{x,y}\in\{0,1\}\) to the client.
        \item If \(r_{x,y}=1\), the client flips $s_{x,y}$; otherwise it does nothing.
    \end{enumerate}
    \item \textbf{Output Correction}
    \begin{enumerate}
        \item The server sends to the client all qubits in the last output layer.
        \item The client performs the final Pauli corrections \(\{Z^{s^{Z}_{x,m}},X^{s^{X}_{x,m}}\}^{n}_{x=1}\) on the received output qubits.
    \end{enumerate}
\end{enumerate}
\label{protcol:UBQC}
\end{protocol}

\section{Remote State Preparation for DQC}
\label{sec:RSP}
In this section, we review the works on ideal functionalities RSP and their  security in the AC framework.
Using remote state preparation (RSP) as an ideal functionality allows to replace the quantum channel between a client and a server by a classical one.

\subsection{Remote State Preparation for Blindness}
The UBQC protocol introduced above requires the server to get a number of states of the form $\ketbra{+_{\theta}}{+_{\theta}}$.
These states are then entangled to construct a brickwork state. 
Dunjko and Kashefi~\cite{dunjko_blind_2016} have introduced the concept of \emph{weak correlations}, which is a necessary and sufficient condition on the set of states sent by the client to obtain the blindness of the protocol.
The following theorem formally introduces this notion.
\begin{theorem}
\textnormal{(See~\cite{dunjko_blind_2016})} The UBQC protocol (Protocol \ref{protcol:UBQC}) with classical input and computation of size $N$, where the client's preparation stage is replaced by the preparation of $N$ states of the form 
$\sigma_{AB}^{i}$ 
\begin{align}
    \sigma_{AB}^{i}=\frac{1}{\abs{\Theta}}\Sigma_{\theta_{i}\in\Theta}\ketbra{\theta_{i}}{\theta_{i}}\otimes\rho_{i}^{\theta_{i}},
\label{eqn:wc}
\end{align}
is blind if and only if the following conditions hold:
\begin{enumerate}
    \item $\rho^{\theta}$ is a normalized quantum state, for all $\theta$,
    \item \(\rho^{\theta}+\rho^{\theta+\pi}=\rho^{\theta^{\prime}}+\rho^{\theta^{\prime}+\pi}\) for all $\theta$, $\theta^{\prime} \in \Theta$,
    \item $\vert \Theta \vert$ is the size of the set $\Theta$, typically 8.
\end{enumerate}
\label{thm:weak}
\end{theorem}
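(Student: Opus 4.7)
The plan is to reduce blindness to a statement about the server's conditional view in each round of UBQC. The server's view in the modified protocol consists of (i) the $N$ subsystems $B_i$ initially received in state $\rho_i^{\theta_i}$, the classical register $A$ holding $\theta_i$ being kept secretly by the client, (ii) the sequence of classical messages $\delta_{x,y} = \phi_{x,y} + \theta_{x,y} + \pi r_{x,y} \pmod{2\pi}$, and (iii) the outcomes the server itself produces. Blindness in the sense of Definition \ref{defn: BQC} requires that this joint view can be reproduced by a simulator seeing only the leak $\ell^{\psi_A}$, and hence must be independent of the hidden computation angles $\{\phi_{x,y}\}$. I would therefore prove both directions round by round: it suffices to establish that, conditional on the history, the joint distribution of $(\delta_{x,y}, B_{x,y})$ is independent of $\phi_{x,y}$ if and only if the three stated conditions hold.

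For sufficiency, I would first use that $r_{x,y} \in \{0,1\}$ is uniform and secret, so that condition 3 ($|\Theta|=8$) ensures $\delta_{x,y}$ is uniformly distributed over the measurement-angle set $\mathbb{Z}_{\pi/4}$ independently of $\phi_{x,y}$. Next, upon observing $\delta_{x,y}$, the server's posterior over $\theta_{x,y}$ collapses onto the two values $\delta_{x,y}-\phi_{x,y}$ and $\delta_{x,y}-\phi_{x,y}+\pi$ (mod $2\pi$), each with probability $\tfrac{1}{2}$, so the reduced state on $B_{x,y}$ is
\begin{equation*}
\tau_{x,y}(\phi_{x,y},\delta_{x,y}) \;=\; \tfrac{1}{2}\bigl(\rho_{x,y}^{\delta_{x,y}-\phi_{x,y}} + \rho_{x,y}^{\delta_{x,y}-\phi_{x,y}+\pi}\bigr).
\end{equation*}
Condition 2 makes this sum constant in $\phi_{x,y}$, and condition 1 guarantees it is a legal density operator. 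A simulator can then reproduce the server's full view by sending on each $B_i$ the fixed state $\tfrac{1}{2}(\rho_i^0 + \rho_i^\pi)$, sampling each $\delta_{x,y}$ uniformly from $\mathbb{Z}_{\pi/4}$, and forwarding the server's own outcomes; this yields perfect blindness by the same argument as the original UBQC proof of \cite{broadbent_universal_2009}.

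For necessity, I would argue by contrapositive: if condition 2 fails at some pair $\theta,\theta'$, then $\tau_{x,y}$ depends nontrivially on $\phi_{x,y}$ and a distinguisher can construct a measurement whose statistics separate two candidate values of $\phi_{x,y}$, violating blindness; if condition 3 fails, the support of $\delta_{x,y}$ shifts with $\phi_{x,y}$ and already leaks classically; condition 1 is forced because $\sigma_{AB}^i$ must be a density operator. The main obstacle is the interactive structure of the protocol: because later $\phi_{x,y}$ can depend adaptively on earlier outcomes chosen by a malicious server, the round-by-round statement must be proved by induction on the history, and one has to check that conditioning on everything the server has already seen leaves the posterior on $\theta_{x,y}$ uniform on the two-element set $\{\delta_{x,y}-\phi_{x,y},\,\delta_{x,y}-\phi_{x,y}+\pi\}$. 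Condition 2 is exactly what is needed to collapse these two branches at each step, so the induction closes on the invariant established above.
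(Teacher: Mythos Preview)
The paper does not supply its own proof of this theorem; it is imported verbatim from \cite{dunjko_blind_2016} and used only as a black box (to feed Lemma~\ref{lemma:correlation} and Theorem~\ref{thm:indistinguishability}). So there is nothing in the present paper to compare your argument against.

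That said, your sketch is essentially the standard argument from the original reference: reduce blindness to the per-qubit conditional state the server holds after seeing $\delta_{x,y}$, compute that state as $\tfrac12(\rho^{\delta-\phi}+\rho^{\delta-\phi+\pi})$ via the secret bit $r_{x,y}$, and invoke condition~2 to make it independent of $\phi_{x,y}$. Two minor corrections. First, condition~3 in the statement is not a substantive hypothesis but a normalisation convention (it fixes $|\Theta|$ in the expression for $\sigma_{AB}^i$ so that, together with condition~1, the state is a genuine density operator); it is not what makes $\delta_{x,y}$ uniform. The uniformity of $\delta_{x,y}$ over $\mathbb{Z}_{\pi/4}$ already follows from $\theta_{x,y}$ being uniform over the additive group $\Theta$, independently of $r_{x,y}$. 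The role of $r_{x,y}$ is exactly the one you identify later: it creates the two-element posterior $\{\delta-\phi,\ \delta-\phi+\pi\}$ on $\theta_{x,y}$, which is what lets condition~2 bite. Second, in the necessity direction your ``support shifts'' argument for condition~3 therefore does not apply; conditions~1 and~3 together are forced simply because $\sigma_{AB}^i$ must be a valid density operator, while condition~2 is the only genuinely blindness-driven constraint.
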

The ideal resource \emph{random RSP for blindness} is specified as follows. If the server is honest, the functionality outputs $\ketbra{+_{\theta}}{+_{\theta}}$ to the server. 
If not, it takes as input from the server the classical description of a quantum state $[\rho^{\theta}]$ and outputs the corresponding quantum state $\rho^{\theta}$ to the server. In both cases, the client receives the classical angle $\theta$. This is formalized in the following definition, and depicted in Figure~\ref{fig:RSP}(a))
\begin{definition}
 The ideal resource \emph{random remote state preparation for blindness}, denoted $RSP_{B}$, has two interfaces, $\mathcal{A}$ to the client and $\mathcal{B}$ to the server. The resource chooses an angle of rotation $\theta$ uniformly at random from the set $\mathbb{Z}_{\frac{\pi}{4}}$. There is a filtered functionality at interface $\mathcal{B}$ and a classical bit c. If $c=0$, the server is honest and the resource outputs a state $\ketbra{+_{\theta}}{+_{\theta}}$ on $\mathcal{B}$. If $c=1$, the ideal functionality takes as input the set \(\{(\theta,[\rho^{\theta}])\}_{\theta}\) from the server.
 
 If the states provided by the server do not satisfy the conditions from Theorem~\ref{thm:weak}, $RSP_{B}$ ignores the input and waits for a new valid set.
 Once the set is received, the functionality outputs $\rho^{\theta}$ at $\mathcal B$. In both case, $RSP_{B}$ outputs the angle $\theta$ at the client's interface. 
\label{defn:rspb}
\end{definition}

Dunjko and Kashefi also introduced another resource that is better suited for our purpose.
It is a
variant of $RSP_{B}$ which allows more operations for a dishonest server.
This resources is depicted in Figure~\ref{fig:RSP}(b)).
\begin{definition}
The ideal resource \emph{measurement-based remote blind state preparation} ($MRSP_{B}$) has two interfaces $\mathcal{A}$ and $\mathcal{B}$. The resource chooses an angle of rotation $\theta$ uniformly at random from the set $\mathbb{Z}_{\frac{\pi}{4}}$. There is a filtered functionality at interface 
$\mathcal{B}$ and a classical bit c. If $c=0$, the server is honest and the resource outputs a state $\ketbra{+_{\theta}}{+_{\theta}}$ on $\mathcal{B}$. If $c=1$, the ideal functionality 
takes as input
the descriptions of eight positive operators \(\{\Pi_{\theta}\}\), such that for all $\theta$ in $\mathbb{Z}_{\frac{\pi}{4}}$, \(\Pi_{\theta}+\Pi_{\theta+\pi}=I\).
In addition, it accepts an arbitrary quantum state $\rho$ of the same dimension as the operator $\Pi_{\theta}$.

If the server's input does not satisfy the properties of Theorem~\ref{thm:weak}, $MRSP_{B}$ ignores it and waits for a new valid set. 
Once a valid input is received, $MRSP_{B}$ applies the measurement 
$\Pi_{\theta},\Pi_{\theta+\pi}$ corresponding to the chosen angle $\theta$ to $\rho$.
Finally, $MRSP_{B}$ outputs the measurement result 
$\theta^{\prime}$, whose value is either $\theta$ or $\theta+\pi$, at the client's interface and
the post-measurement state $\rho^{\theta^{ \prime}}$ at the server's interface. 
\label{defn:mrspb}
\end{definition}
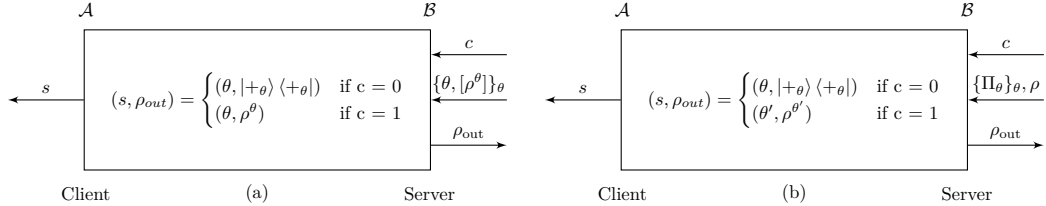
\begin{figure}[h!]
    \centering
    \begin{tikzpicture}[every node/.style = {rectangle, align=center,scale=0.7}, scale=0.7]
    \node(blindness)[draw, inner xsep=5mm, inner ysep=8mm, minimum size=1.5cm]{
    $(s,\rho_{out})=
    \begin{cases}
    (\theta, \ket{+_\theta}\bra{+_{\theta}}) &\text{if c = 0}\\
    (\theta, \rho^\theta) &\text{if c = 1}
    \end{cases}$};
    \node(veri)[draw, inner xsep=5mm, inner ysep=8mm, minimum size=1.5cm, right=2.5cm of blindness]{
    $(s,\rho_{out})=
    \begin{cases}
    (\theta, \ket{+_{\theta}}\bra{+_{\theta}}) &\text{if c = 0}\\
    (\theta^{\prime}, \rho^{\theta^\prime}) &\text{if c = 1}
    \end{cases}$};
    \node(a) [below = 1mm of blindness] {(a)};
    \node(b) [below = 1mm of veri] {(b)};
    \node(client_a) [left = 1.6cm of a] {Client};
    \node(server_a) [right = 1.6cm of a] {Server};
    \node(client_b) [left = 1.6cm of b] {Client};
    \node(server_b) [right = 1.6cm of b] {Server};
    \node(interface_a1) [above = 2.1cm of client_a] {$\A$};
    \node(interface_a2) [above = 2.1cm of server_a] {$\B$};
    \node(interface_b1) [above = 2.1cm of client_b] {$\A$};
    \node(interface_b2) [above = 2.1cm of server_b] {$\B$};
    \coordinate[left = of blindness] (a0);
    \coordinate[right = of blindness] (a3);
    \coordinate[above = 6mm of a3] (a4);
    \coordinate[below = 6mm of a3] (a5);
    \draw[-latex'] (a0-| blindness.west) -- node[pos=0.5, above] {$s$} (a0);
    \draw[-latex'] (a3) -- node[pos=0.5, above] {$\{\theta,[\rho^\theta]\}_\theta$} (a3-| blindness.east);
    \draw[-latex'] (a4) -- node[pos=0.5, above] {$c$}  (a4-| blindness.east);
    \draw[-latex'] (a5-| blindness.east) -- node[pos=0.5, above] {$\rho_\out$} (a5);
    \coordinate[left = of veri] (b0);
    \coordinate[right = of veri] (b3);
    \coordinate[above = 6mm of b3] (b4);
    \coordinate[below = 6mm of b3] (b5);
    \draw[-latex'] (b0-| veri.west) -- node[pos=0.5, above] {$s$} (b0);
    \draw[-latex'] (b4) -- node[pos=0.5, above] {$c$}  (b4-| veri.east);
    \draw[-latex'] (b3) -- node[pos=0.5, above] {$\{\Pi_\theta\}_\theta,\rho$}  (b3-| veri.east);
    \draw[-latex'] (b5-| veri.east) -- node[pos=0.5, above] {$\rho_\out$} (b5);
    \end{tikzpicture}
    \caption{RSP ideal resources for blindness (a) and measured-based RSP for blindness (b)}
    \label{fig:RSP} 
\end{figure}

The connection between these two ideal resources follows from the construction of
$MRSP_{B}$ from $RSP_{B}$, which preservers both correctness and security.
Consider a trivial protocol \(\pi=(\pi_{A}, \pi_{B})\) in which $\pi_{A}$ does nothing and $\pi_{B}$ fixes the classical bit to $c=0$. Following the conditions of Definition \ref{defn:ac}, it was shown that:
\begin{align}
    \pi_{A}RSP_{B}\pi_{B}=MRSP_{B}\perp \hspace{1cm}and\hspace{1cm} \pi_{A} RSP_{B}=MRSP_{B}\sigma_{B}. 
\end{align}
The inputs and outputs of these two ideal resources are trivially equivalent in the honest case, which implies the correctness. To prove the security, the authors provided a simulator~$\sigma_{B}$ and showed that the outputs of $\mathcal{A}$ and $\mathcal{B}$ are the same actions for $\pi_{A}RSP_{B}$ and $MRSP_{B}\sigma_{B}$. Moreover, the authors show that $RSP_{B}$ and $MRSP_{B}$
can be used for UBQC. This leads to a perfect blind DQC without a quantum channel between the client and the server. In this context, perfect blindness means that the protocol leaks nothing more than what is strictly required (such as, for example, the size of the computation). The formal definition can be found in~\cite{dunjko_blind_2016}.

The following theorem formalizes this argument for $MRSP_{B}$.
\begin{theorem}{~\cite{dunjko_blind_2016}}
The UBQC protocol in which the client has access to the ideal functionality $MRSP_{B}$ 
rather than to a quantum channel and a random generator of the $\ket{+_{\theta}}$ states,
exactly constructs DQC with perfect blindness.
\label{thm:MRSPDQC}
\end{theorem}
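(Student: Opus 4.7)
The plan is to verify the two conditions of Definition~\ref{defn:ac} with $\varepsilon = 0$. Let $\pi = (\pi_A, \pi_B)$ denote the UBQC protocol (Protocol~\ref{protcol:UBQC}) in which the client's preparation of $\ket{+_{\theta_{x,y}}}$ states together with the quantum channel are replaced by queries to $MRSP_B$: the client invokes $MRSP_B$ to obtain each angle $\theta_{x,y}$, while the server receives the corresponding qubit directly from the resource. The rest of the protocol, including all classical interaction and final corrections, is unchanged.

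For correctness the filter $\perp$ forces $c=0$ at the server's interface, so $MRSP_B$ deterministically outputs $\ketbra{+_\theta}{+_\theta}$ on $\mathcal{B}$ and the angle $\theta$ on $\mathcal{A}$. This is an exact replacement for the honest client's state generation and transmission in the original UBQC, so the joint state after the setup phase is identical to the one obtained by honest UBQC. By the correctness of UBQC the client then recovers $\mathcal{U}(\psi_A)$, matching the output of $\mathcal{S}^{blind}\perp$, and the correctness equation is satisfied with zero error.

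For security I would construct a simulator $\sigma$ at the server's interface of $\mathcal{S}^{blind}$ which internally emulates both $MRSP_B$ and the server-side converter $\pi_B$. Given the allowed leak $\ell^{\psi_A}$, $\sigma$ samples each $\theta_{x,y}$ uniformly from $\mathbb{Z}_{\pi/4}$, accepts the adversary's queries consisting of projector pairs $\{\Pi_\theta\}$ with $\Pi_\theta + \Pi_{\theta+\pi} = I$ together with an arbitrary $\rho$, performs the measurement on $\rho$, and forwards the post-measurement state to the adversary. It then runs the UBQC classical interaction, sending each $\delta_{x,y}$ as a uniform element of $\mathbb{Z}_{\pi/4}$, and assembles from the adversary's operations on the output layer an effective CPTP map $\mathcal{E}$ to pass to $\mathcal{S}^{blind}$. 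The key step is to invoke Theorem~\ref{thm:weak}: the states produced by $MRSP_B$ are exactly of the form \eqref{eqn:wc}, and the completeness relation $\Pi_\theta + \Pi_{\theta+\pi} = I$ imposed by $MRSP_B$ translates directly into the weak correlation condition $\rho^\theta + \rho^{\theta+\pi} = \rho^{\theta'} + \rho^{\theta'+\pi}$, both sides reducing to $\rho$. The hypothesis of Theorem~\ref{thm:weak} is therefore met for every adversarial query, so perfect blindness holds.

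The main obstacle will be showing that the classical trace of messages produced by $\sigma$ is distributed identically to the real execution $\pi_A MRSP_B$. This requires the standard one-time-pad argument of UBQC: the messages $\delta_{x,y} = \phi_{x,y} + \theta_{x,y} + \pi r_{x,y}$ are, from the server's viewpoint, uniform on $\mathbb{Z}_{\pi/4}$ and independent of the underlying computation thanks to the uniformity of $\theta_{x,y}$ and $r_{x,y}$. Combining this classical step with the weak correlation guarantee on the quantum register is what drives $\varepsilon$ down to zero; particular care will be required to ensure that $\sigma$ can faithfully reproduce whatever CPTP map the dishonest server effectively applies to the final output layer, so that the $\mathcal{E}(\psi_{AB})$ branch of $\mathcal{S}^{blind}$ is matched in every execution.
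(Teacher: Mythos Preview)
The paper does not supply its own proof of this theorem: it is quoted verbatim as a result of Dunjko and Kashefi~\cite{dunjko_blind_2016}, with only the surrounding paragraph indicating that $RSP_B$ constructs $MRSP_B$ and that both can be plugged into UBQC. There is therefore nothing in the present paper to compare your argument against line by line.

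That said, your sketch is the expected one and matches the strategy of the cited work: establish the correctness equation of Definition~\ref{defn:ac} by noting that with $c=0$ the resource outputs exactly $\ketbra{+_\theta}{+_\theta}$, and establish the security equation by a simulator that internally runs the server side of UBQC on dummy angles, invoking Theorem~\ref{thm:weak} to argue that the constraint $\Pi_\theta+\Pi_{\theta+\pi}=I$ enforced by $MRSP_B$ yields the weak-correlation form~\eqref{eqn:wc}, and using the $(\theta_{x,y},r_{x,y})$ one-time pad to make the transmitted $\delta_{x,y}$ uniform and independent of the computation. The only point where your write-up is looser than the original proof is the last paragraph: in~\cite{dunjko_blind_2016} (building on~\cite{dunjko_composable_2014}) the simulator does not attempt to ``assemble'' a CPTP map from the adversary's actions; rather it runs the entire UBQC interaction on a fixed dummy input and dummy computation of the correct size, obtains whatever the adversary returns as the output register, and feeds that together with the adversary's private register to $\mathcal{S}^{blind}$ as $(\mathcal{E},\psi_B)$. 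Framing it that way avoids the delicate issue you flag, since indistinguishability then reduces cleanly to the blindness statement of Theorem~\ref{thm:weak}.
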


\subsection{Limitations of RSP with only Classical Channel}
While $RSP_B$ and $MRSP_B$ remove the need for a quantum communication channel between the client and the server, we have not discussed how these resources can be implemented.
A fully-classical blind DQC protocol could be obtained for example by implementing one of the two resources with a classical communication channel.
This idea was investigated by~\cite{badertscher_security_2020}, who introduced the following definition.
\begin{definition}
An ideal resource $\mathcal{S}$ is said to be $\varepsilon$-classical-realizable if it is realizable from a classical channel $\mathcal{C}$, i.e. if there exists a protocol $\pi=(\pi_{A}, \pi_{B})$ between two parties interacting classically such that:
\begin{align}
    \mathcal{C}\xrightarrow{\pi,\varepsilon}\mathcal{S}
\end{align}
\label{defn:realizable}
\end{definition}
In order to prove the composable security of $\varepsilon$-classical-realizable RSP,
we need to show that no unbounded adversary can learn information on $\theta$
by accessing only the right interface $\mathcal{B}$. Unfortunately, the authors show that there is no \emph{describable} remote state preparation protocol with composable security. In this context, describable means extracting a classical approximate description of a quantum state $[\rho]$ by accessing the state $\rho$ on the interface $\mathcal{B}$.
Since a protocol using only classical communication is obviously describable, 
it follows that there is no classical-realizable RSP with composable security. This in turn implies that UBQC with classical-realizable RSP cannot be composable secure. 

As a result, it is necessary to make additional assumptions to remove the quantum interaction between the client and the server. While~\cite{badertscher_security_2020} considers additional computational assumption to bound the adversary's power, we take a different approach, introducing additional hardware assumptions, such
as tamper-proof quantum operations, in order to get a secure DQC protocol with blindness using only classical communication. 

\section{QEnclave as an Ideal Functionality: Remote State Rotation}
\label{sec:RSR}
In this section, we introduce an alternative to RSP named \emph{remote state rotation} (RSR), and analyze its composable security in the AC framework. Compared to the other ideal functionalities, RSR is even weaker. While RSP generates quantum states by itself, RSR only allows rotations of single-qubit states generated by the server.

\begin{definition}
\textnormal{(See Figure \ref{fig:RSR})} The ideal resource named \emph{remote state rotation} for blindness ($RSR_{B}$) has two interfaces $\mathcal{A}$ and $\mathcal{B}$. 
After receiving a single-qubit state $\rho_{in}$  from interface $\mathcal{B}$, it performs a 
rotation $Z(\theta)$ with $\theta$ chosen uniformly at random from the set $\mathbb{Z}_{\frac{\pi}{4}}$. It then outputs ($\rho_{out}$) at the server's interface and the angle $\theta$ at the client's interface. 
\label{defn:rsr}
\end{definition}
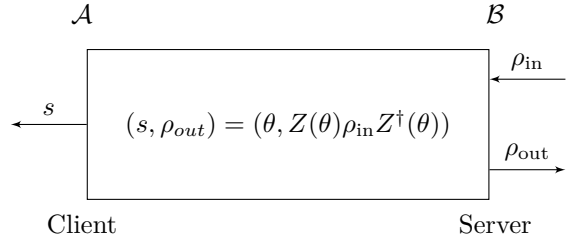
\begin{figure}[h!]
    \centering
    \begin{tikzpicture}[every node/.style = {rectangle, align=center}]
    \node(blindness)[draw, inner xsep=5mm, inner ysep=8mm, minimum size=1.5cm]{
    $(s,\rho_{out})=(\theta,Z(\theta)\rho_\inp Z^\dagger(\theta))$};
    \node(a) [below = 2mm of blindness] {};
    \node(client_a) [left = 2cm of a] {Client};
    \node(server_a) [right = 2cm of a] {Server};
    \node(interface_a1) [above = 2.3cm of client_a] {$\A$};
    \node(interface_a2) [above = 2.3cm of server_a] {$\B$};
    \coordinate[left = of blindness] (a0);
    \coordinate[right = of blindness] (a3);
    \coordinate[above = 6mm of a3] (a4);
    \coordinate[below = 6mm of a3] (a5);
    \draw[-latex'] (a0-| blindness.west) -- node[pos=0.5, above] {$s$} (a0);
    \draw[-latex'] (a4) -- node[pos=0.5, above] {$\rho_\inp$}  (a4-| blindness.east);
    \draw[-latex'] (a5-| blindness.east) -- node[pos=0.5, above] {$\rho_\out$} (a5);
    \end{tikzpicture}
    \caption{Ideal Functionality of QEnclave: Remote State Rotation}
    \label{fig:RSR}
\end{figure}

Similar to $RSP_B$ and $MRSP_B$, this functionality fully removes any quantum capability for the client. In particular, using $RSR_B$ removes the assumption for a quantum communication channel between the client and the server.

We further define a two party protocol \(\pi=(\pi_{A}, \pi_{B})\) to prepare quantum states with $RSR_{B}$ in which $\pi_{A}$ only receives the angle $\theta$ from the interface $\mathcal{A}$ of $RSR_{B}$, and $\pi_{B}$ takes as input a classical bit $c$ and a quantum state from the server.

If $c=0$, the server is honest and $\pi_{B}$ accepts $\ketbra{+}{+}$ as input from the server. If $c=1$, the dishonest server prepares an arbitrary quantum state \(\rho=\Omega(\ket{+}\bra{+}\otimes\rho_{aux})\Omega^{\dagger}\). Here, $\Omega$ is an arbitrary unitary that represents the server's deviation on the quantum source and $\rho_{aux}$ is an auxiliary state chosen by the server.

After tracing out the auxiliary state of $\rho$, we get $\rho_{in}$, the input to $RSR_{B}$ which is a single-qubit chosen by the dishonest server. In particular, this state can be entangled with the server's auxiliary system.

We show a construction of DQC with $RSR_{B}$ that achieves perfect blindness in two steps. First, In Lemma~\ref{lemma:correlation}, we prove that the outcome of $RSR_{B}$ satisfies the conditions for blindness of Theorem~\ref{thm:weak}. Then, in Theorems~\ref{thm:indistinguishability} and~\ref{thm:RSRUBQC}, we show the security of DQC with blindness obtained from $RSR_{B}$.

\begin{lemma}
For any quantum states $\rho_{in}$ as used as input of $RSR_B$, the outcome system of the client and the server $\sigma_{AB}$ satisfies the conditions of weak correlation of UBQC.
\label{lemma:correlation}
\end{lemma}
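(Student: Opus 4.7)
The plan is to unpack the joint client--server state produced by $RSR_B$ and verify the three conditions of Theorem~\ref{thm:weak} directly. Since the client obtains a uniformly random $\theta\in\mathbb{Z}_{\frac{\pi}{4}}$ and the server is left with $\rho^{\theta}:=Z(\theta)\rho_{in}Z^{\dagger}(\theta)$, the joint classical--quantum register takes the form
\begin{align*}
\sigma_{AB}=\frac{1}{|\Theta|}\sum_{\theta\in\Theta}\ket{\theta}\bra{\theta}\otimes Z(\theta)\rho_{in}Z^{\dagger}(\theta),
\end{align*}
which already matches the weak-correlation template of Eq.~\eqref{eqn:wc}. Conditions~1 and~3 are then immediate: $Z(\theta)$ is unitary so each $\rho^{\theta}$ is a valid normalised density operator, and $|\Theta|=|\mathbb{Z}_{\frac{\pi}{4}}|=8$.

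The substance of the lemma therefore lies in condition~2. Here I would exploit the identity $Z(\theta+\pi)=Z(\theta)\,Z$, which yields
\begin{align*}
\rho^{\theta}+\rho^{\theta+\pi}=Z(\theta)\bigl(\rho_{in}+Z\rho_{in}Z^{\dagger}\bigr)Z^{\dagger}(\theta).
\end{align*}
Expanding $\rho_{in}$ in the computational basis of the rotated qubit, the $Z$-twirl $\rho_{in}+Z\rho_{in}Z^{\dagger}$ cancels the two off-diagonal entries and leaves only the populations on $\ket{0}\bra{0}$ and $\ket{1}\bra{1}$. Since $Z(\theta)$ is itself diagonal in this basis, conjugation by $Z(\theta)$ acts trivially on the resulting diagonal operator, so $\rho^{\theta}+\rho^{\theta+\pi}$ is independent of $\theta$, which is exactly condition~2.

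The main subtlety I anticipate is the adversarial scenario described just before the lemma, where $\rho_{in}$ is obtained by tracing out an auxiliary system from the server-prepared state $\Omega(\ket{+}\bra{+}\otimes\rho_{aux})\Omega^{\dagger}$, so the qubit fed into $RSR_B$ may remain entangled with a register the server still holds. To handle this cleanly I would work with the full joint state on (input qubit, server auxiliary) and observe that $RSR_B$ acts only as $Z(\theta)\otimes I$ on that register. Writing the joint state block-wise as
\begin{align*}
\ket{0}\bra{0}\otimes A+\ket{0}\bra{1}\otimes B+\ket{1}\bra{0}\otimes B^{\dagger}+\ket{1}\bra{1}\otimes C,
\end{align*}
the same twirl argument kills the $B$ blocks, and the surviving diagonal operator commutes with $Z(\theta)\otimes I$. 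Hence the $\theta$-sum $\rho^{\theta}+\rho^{\theta+\pi}$ is $\theta$-independent on the full register, and therefore on any marginal the server can access, so the weak-correlation conditions of Theorem~\ref{thm:weak} are met even when the server maintains entanglement with the input qubit.
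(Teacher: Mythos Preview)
Your proposal is correct and follows the same overall strategy as the paper: write the joint state in the form of Eq.~\eqref{eqn:wc}, note that conditions~1 and~3 are immediate, and establish condition~2 by showing that $\rho^{\theta}+\rho^{\theta+\pi}$ is independent of~$\theta$, first for an unentangled input and then for an input entangled with the server's auxiliary register.

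The execution differs in a useful way. The paper expands $\rho_{in}$ (and the entangled joint state) in explicit $\alpha,\beta$ coefficients, applies $Z(\theta)$ entrywise, and then reads off that the $e^{\pm i\theta}$ phases cancel in the sum $\rho^{\theta}+\rho^{\theta+\pi}$. You instead use the operator identity $Z(\theta+\pi)=Z(\theta)Z$ to factor out the $\theta$-dependence once and for all, observe that the inner $Z$-twirl $\rho_{in}+Z\rho_{in}Z^{\dagger}$ is diagonal in the computational basis of the rotated qubit, and conclude by noting that diagonal operators commute with $Z(\theta)$. This is cleaner and has the advantage that your block decomposition $\ket{0}\bra{0}\otimes A+\ket{0}\bra{1}\otimes B+\ket{1}\bra{0}\otimes B^{\dagger}+\ket{1}\bra{1}\otimes C$ handles arbitrary (possibly mixed) joint states with the auxiliary register, whereas the paper's entangled case is written for a pure state $\alpha\ket{0}\ket{\psi_{0}}+\beta\ket{1}\ket{\psi_{1}}$ and tacitly relies on purification to cover the general situation.
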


\begin{proof}
For simplicity, we first assume that $\rho_{in}$ is not entangled with the server's auxiliary system. Without
loss of generality,
we get \(\rho_{in}=|\alpha|^{2}\ket{0}\bra{0}+\alpha\beta^\ast\ket{0}\bra{1}+\alpha^\ast\beta\ket{1}\bra{0}+|\beta|^{2}\ket{1}\bra{1}\).
In this case, the output of $RSR_{B}$ $\rho^{\theta}$ is
\begin{align}
    \rho^{\theta}=|\alpha|^{2}\ket{0}\bra{0}+e^{-i\theta}\alpha\beta^\ast\ket{0}\bra{1}+e^{i\theta}\alpha^\ast\beta\ket{1}\bra{0}+|\beta|^{2}\ket{1}\bra{1}.
\label{eqn:singlequbit}
\end{align}
For any $\theta$ in the set $\mathbb{Z}_{\frac{\pi}{4}}$, we thus have
\begin{align}
    \rho^{\theta}+\rho^{\theta+\pi}&=2|\alpha|^{2}\ket{0}\bra{0}+2|\beta|^{2}\ket{1}\bra{1}.
\end{align}
Since this is independent of $\theta$, the state satisfies the weak correlation conditions.

In the general case, $\rho_{in}$ can be entangled with the server's auxiliary system.
we thus write
\(\rho_{in}^{\prime}=|\alpha|^{2}\ket{0}\bra{0}\otimes\ket{\psi_{0}}\bra{\psi_{0}}+\alpha\beta^\ast\ket{0}\bra{1}\otimes\ket{\psi_{0}}\bra{\psi_{1}}+\alpha^\ast\beta\ket{1}\bra{0}\otimes\ket{\psi_{1}}\bra{\psi_{0}}+|\beta|^{2}\ket{1}\bra{1}\otimes\ket{\psi_{1}}\bra{\psi_{1}}\), where $\ket{\psi_{0}}$ and $\ket{\psi_{1}}$ are states of the server's auxiliary system.

After the rotation of $RSR_{B}$ on the first subsystem, we get the following entangled state:
\begin{multline}
    \rho^{\theta}=|\alpha|^{2}\ket{0}\bra{0}\otimes\ket{\psi_{0}}\bra{\psi_{0}}+e^{-i\theta}\alpha\beta^\ast\ket{0}\bra{1}\otimes\ket{\psi_{0}}\bra{\psi_{1}}\\
    +e^{i\theta}\alpha^\ast\beta\ket{1}\bra{0}\otimes\ket{\psi_{1}}\bra{\psi_{0}}+|\beta|^{2}\ket{1}\bra{1}\otimes\ket{\psi_{1}}\bra{\psi_{1}}.
\label{eqn:entangledqubit}
\end{multline}
For any $\theta$ in the set $\mathbb{Z}_{\frac{\pi}{4}}$, we have
\begin{align}
    \rho^{\theta}+\rho^{\theta+\pi}&=2|\alpha|^{2}\ket{0}\bra{0}\otimes \ket{\psi_{0}}\bra{\psi_{0}}+2|\beta|^{2}\ket{1}\bra{1}\otimes\ket{\psi_{1}}\bra{\psi_{1}}.
\end{align}
Since the result $\rho^{\theta}+\rho^{\theta+\pi}$ is again, independent of $\theta$, the joint state of the client and the server also satisfy the weak correlation conditions
for any state $\sigma_{AB}$.
\end{proof}

We now prove the security of $RSR_B$ with the UBQC protocol. We prove it by showing that the resource $MRSP_{B}$ introduced in Definition \ref{defn:mrspb} can be constructed from $RSR_{B}$. Since $MRSP_{B}$ can be composed with a UBQC protocol to get DQC with perfect blindness, so does $RSR_{B}$

\begin{theorem}
The protocol \(\pi=(\pi_{A}, \pi_{B})\) introduced above with ideal resource $RSR_{B}$ constructs the ideal resource $MRSP_{B}$. 
\label{thm:indistinguishability}
\end{theorem}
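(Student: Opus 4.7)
The proof follows Definition \ref{defn:ac}: the protocol $\pi = (\pi_A, \pi_B)$ is the one given in the paragraph preceding the statement, where $\pi_A$ simply relays the angle $\theta$ produced at $RSR_B$'s client interface to the outside world, and $\pi_B$ feeds $\ket{+}\bra{+}$ into $RSR_B$'s server interface whenever the filter is in place. Correctness is immediate by unfolding both sides: $RSR_B$ samples $\theta$ uniformly at random from $\mathbb{Z}_{\pi/4}$ and returns the pair $(\theta,\ket{+_\theta}\bra{+_\theta})$, which coincides exactly with the honest branch of $MRSP_B \perp$ in Definition \ref{defn:mrspb}, yielding $\varepsilon = 0$ on the correctness side.

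The bulk of the argument is the construction of a simulator $\sigma$, plugged into $MRSP_B$'s server interface, that makes $MRSP_B \sigma$ indistinguishable from $\pi_A RSR_B$. The idea is that $\sigma$ uses $MRSP_B$ to realize the random $Z$-rotation on the distinguisher's state: on input $\rho_{in}$ from the distinguisher, $\sigma$ passes $\rho := \rho_{in}$ to $MRSP_B$ together with the quantum instrument whose Kraus operators are $K_\theta = \tfrac{1}{\sqrt{2}} Z(\theta)$. The associated positive operators $\Pi_\theta = K_\theta^\dagger K_\theta = \tfrac{1}{2} I$ trivially satisfy $\Pi_\theta + \Pi_{\theta+\pi} = I$; each outcome occurs with probability $\mathrm{tr}(K_\theta \rho_{in} K_\theta^\dagger) = \tfrac{1}{2}$, so $MRSP_B$ announces a uniformly random $\theta' \in \mathbb{Z}_{\pi/4}$ at the client interface and hands the post-measurement state $Z(\theta') \rho_{in} Z^\dagger(\theta')$ back to $\sigma$, which forwards it unchanged to the distinguisher. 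The joint distribution of client and distinguisher outputs then matches $\pi_A RSR_B$ exactly.

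The main subtlety, and what I expect to be the real obstacle, is that Definition \ref{defn:mrspb} specifies $\{\Pi_\theta\}$ only as positive operators, leaving the post-measurement state rule implicit; in particular the naive L\"uders reading $K_\theta = \sqrt{\Pi_\theta}$ is insufficient here, since for $\Pi_\theta = I/2$ it would return $\rho_{in}$ unchanged. I plan to resolve this either by arguing that the dishonest branch implicitly accepts a full quantum instrument (the only interpretation under which $\rho^{\theta'}$ is uniquely determined), or, as a fallback, by replacing the construction with a Stinespring dilation on $\mathcal{H} \otimes \mathcal{K}$: $\sigma$ appends an ancilla to $\rho_{in}$, submits to $MRSP_B$ a projective measurement that realizes the same Kraus instrument on the dilated space, and locally uncomputes the isometry and traces out the ancilla before forwarding the state. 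A last sanity check is that this also covers the case where $\rho_{in}$ is entangled with an external register held by the distinguisher, which follows because $K_\theta$ acts linearly on $\sigma$'s local subsystem; the joint post-measurement state is then $(Z(\theta') \otimes I)\, \rho_{in}\, (Z^\dagger(\theta') \otimes I)$, which is exactly what $RSR_B$ produces in the real world on the same subsystem, giving perfect indistinguishability in the general case as well.
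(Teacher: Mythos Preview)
Your correctness argument is the paper's. For security, your primary simulator---feed $\rho_{in}$ directly into $MRSP_B$ with Kraus operators $K_\theta = Z(\theta)/\sqrt{2}$ and $\Pi_\theta = I/2$---is the cleanest idea, but the obstacle you flag is real and fatal to that branch: Definition~\ref{defn:mrspb} hands $MRSP_B$ only the eight positive operators, not a full instrument, and under the L\"uders rule $\Pi_\theta = I/2$ leaves $\rho_{in}$ unchanged. So the first option (``argue the resource implicitly accepts an instrument'') is not available without rewriting the definition, and you should commit to the dilation fallback from the start.

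That fallback is exactly what the paper does, made concrete. The paper's simulator appends an ancilla $\ket{0}$, applies CNOT with $\rho_{in}$ as control, and sends \emph{only the ancilla} to $MRSP_B$ together with the rank-one projectors $\Pi^\theta = \ket{+_{-\theta}}\bra{+_{-\theta}}$, which satisfy $\Pi^\theta + \Pi^{\theta+\pi} = I$. The control qubit---now entangled with the ancilla---is what gets returned to the distinguisher; measuring the ancilla steers it to $Z(\theta')\rho_{in}Z^\dagger(\theta')$ with $\theta'$ uniform, which the paper checks by direct calculation in both the separable and entangled cases. The point of using rank-one projectors is precisely that they remove the ambiguity in the post-measurement rule that kills your primary approach. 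Your abstract Stinespring description is more general; the paper's CNOT-plus-projector construction is the minimal explicit instance, and has the further virtue that no ``uncomputation'' is needed---the post-measurement ancilla that $MRSP_B$ returns is simply discarded.
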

\begin{proof}
We show that both the correctness and the security condition are satisfied. 
More precisely, proving the security amounts to showing that a distinguisher cannot distinguish $MRSP_{B}$ from the protocol. This translates into the following equations, for a simulator $\sigma_B$ and the protocol \(\pi=(\pi_{A}, \pi_{B})\) with $RSR_B$. 
\begin{align}
\label{eqn:correctness}
    \pi_{A}RSR_{B}\pi_{B}\approx_{\varepsilon}MRSP_{B}\perp,
\end{align}
and
\begin{align}
\label{eqn:security}
    \pi_{A}RSR_{B}\approx_{\varepsilon}MRSP_{B}\sigma_{B}.
\end{align}
For the correctness, when the server is honest, the  ideal resources $RSR_{B}$ and $MRSP_{B}$ both output an angle $\theta$ at interface $\mathcal{A}$ and its corresponding quantum state $\ket{+_{\theta}}\bra{+_{\theta}}$ at interface $\mathcal{B}$. Equation~\ref{eqn:correctness} is thus immediately satisfied.

For the security, we introduce the simulator $\sigma_{B}$, defined as follows.
It accepts and sends $c=1$ to $MRSP_{B}$, as well as a set of projectors \(\{\Pi^{\theta}\}\), where \(\Pi^{\theta}=\ket{+_{-\theta}}\bra{+_{-\theta}}\). 
After receiving a quantum system $\rho$ from the server, the simulator takes the input $\rho_{in}$ of the same dimension as \(\{\Pi^{\theta}\}\), and generates a qubit $\ket{0}$.
A CNOT gate is applied to these two qubits, where $\rho_{in}$ is used as the control qubit ($\ket{\phi_{1}}$) and $\ket{0}$ the target bit ($\ket{\phi_{2}}$). 
This gives the simulator state (\(\rho_{\sigma_{B}}=\ket{\phi_{12}}\bra{\phi_{12}}\)).
Finally, $\sigma_B"$ sends the first qubit $\ket{\phi_{1}}$  back as the outcome state to the server, whereas the second qubit,
$\ket{\phi_{2}}$, is sent to the resource $MRSP_{B}$.

We show that the outcome is similar to the expression obtained in Lemma~\ref{lemma:correlation}. 
Again, consider first the case where $\rho_{in}$ is not entangled with the server's auxiliary system.
We then obtain the following expression for $\ket{\phi_{12}^{\prime}}$ after the operation of $MRSP_{B}$:
\begin{align*}
    \ket{\phi_{12}^{\prime}}&=\frac{\Pi^{\theta}_{2}}{\sqrt{\bra{\phi_{12}}\Pi^{\theta}_{2}\ket{\phi_{12}}}}\ket{\phi_{12}}\\
    &=\frac{1}{\sqrt{2}}(\ket{0}+e^{-i\theta}\ket{1})(\bra{0}+e^{i\theta}\bra{1})(\alpha\ket{00}+\beta\ket{11})\\
    &=\frac{1}{\sqrt{2}}(\alpha\ket{00}+e^{-i\theta}\alpha\ket{01}+e^{i\theta}\beta\ket{10}+\beta\ket{11})
\end{align*}
We obtain the outcome of simulator by tracing out the second quantum subsystem.
\begin{align*}
    \rho_{1}&=Tr_{2}(\ket{\phi_{12}^{\prime}}\bra{\phi_{12}^{\prime}})\\
    &=|\alpha|^{2}\ket{0}\bra{0}+e^{-i\theta}\alpha\beta^\ast\ket{0}\bra{1}+e^{i\theta}\alpha^\ast\beta\ket{1}\bra{0}+|\beta|^{2}\ket{1}\bra{1}
\end{align*}
The outcome quantum state is exactly the same as result as the outcome of $RSR_{B}$ in Eq.(\ref{eqn:singlequbit}). Since a similar calculation holds for the projector \(\Pi^{\theta+\pi}\),
the outcome joint state of the client and the server of $MRSP_{B}$ is exactly the same as $RSR_{B}$. 

Consider now an arbitrary entangled state \(\alpha\ket{0}\ket{\psi_{0}}+\beta\ket{1}\ket{\psi_{1}}\).
The simulator $\sigma_{B}$ takes the first single-qubit subsystem as control qubit, and performs
the same operation as in the previous case. After the operation of $MRSP_{B}$, we have:
\begin{align*}
    \ket{\phi_{12}^{\prime}}&=\frac{\Pi^{\theta}_{2}}{\sqrt{\bra{\phi_{12}}\Pi^{\theta}_{2}\ket{\phi_{12}}}}\ket{\phi_{12}}\\
    &=\frac{1}{\sqrt{2}}(\ket{0}+e^{-i\theta}\ket{1})(\bra{0}+e^{i\theta}\bra{1})(\alpha\ket{0}\ket{\psi_{0}}\ket{0}+\beta\ket{1}\ket{\psi_{1}}\ket{1})\\
    &=\frac{1}{\sqrt{2}}(\alpha\ket{0}\ket{\psi_{0}}\ket{0}+e^{-i\theta}\alpha\ket{0}\ket{\psi_{0}}\ket{1}+e^{i\theta}\beta\ket{1}\ket{\psi_{1}}\ket{0}+\beta\ket{1}\ket{\psi_{1}}\ket{1})
\end{align*}
Then, after tracing out the second qubit, we obtain:
\begin{align*}
    \rho_{1}=&Tr_{2}(\ket{\phi_{12}^{\prime}}\bra{\phi_{12}^{\prime}})\\
    =&|\alpha|^{2}\ket{0}\bra{0}\otimes\ket{\psi_{0}}\bra{\psi_{0}}+e^{-i\theta}\alpha\beta^\ast\ket{0}\bra{1}\otimes\ket{\psi_{0}}\bra{\psi_{1}}\\
    & +e^{i\theta}\alpha^\ast\beta\ket{1}\bra{0}\otimes\ket{\psi_{1}}\bra{\psi_{0}}+|\beta|^{2}\ket{1}\bra{1}\otimes\ket{\psi_{1}}\bra{\psi_{1}}.
\end{align*}
Again, The output quantum state is exactly equals to $\rho^{\theta}$ specified in Eq.(\ref{eqn:entangledqubit}). In consequence, the resource $RSR_{B}$ is perfectly indistinguishable with the resource $MRSP_{B}$, that is, Equations~\ref{eqn:correctness} and~\ref{eqn:security} are satisfied
with $\epsilon=0$.

\end{proof}

Finally, combining the fact that we can perfectly construct $MRSP_{B}$ from $RSR_{B}$
with Theorem~\ref{thm:MRSPDQC}, we obtain the following result.

\begin{theorem}
The UBQC protocol with the client accessing to the $RSR_{B}$ constructs the ideal functionality of DQC with perfect blindness.
\label{thm:RSRUBQC}
\end{theorem}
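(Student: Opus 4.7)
The plan is to derive this theorem as an essentially immediate corollary of Theorems~\ref{thm:indistinguishability} and~\ref{thm:MRSPDQC}, exploiting the composability of constructions in the AC framework. Theorem~\ref{thm:indistinguishability} establishes that the two-party protocol $\pi = (\pi_A, \pi_B)$ perfectly constructs $MRSP_B$ from $RSR_B$, i.e.\ $RSR_B \xrightarrow{\pi, 0} MRSP_B$. Theorem~\ref{thm:MRSPDQC} states that UBQC, when the client invokes $MRSP_B$ in place of a quantum channel and a $\ket{+_{\theta}}$ source, exactly constructs the ideal DQC resource $\mathcal{S}^{blind}$ with perfect blindness. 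The AC framework~\cite{maurer_abstract_2011} guarantees transitivity of constructions: if $\mathcal{R}_1 \xrightarrow{\pi_1,\varepsilon_1} \mathcal{R}_2$ and $\mathcal{R}_2 \xrightarrow{\pi_2,\varepsilon_2} \mathcal{R}_3$, then $\mathcal{R}_1 \xrightarrow{\pi_2 \circ \pi_1, \varepsilon_1+\varepsilon_2} \mathcal{R}_3$. Since both constructions above are perfect ($\varepsilon = 0$), so is their composition.

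Concretely, I would describe the composed protocol as follows. On the client side, instead of preparing $n\times m$ qubits $\ket{+_{\theta_{x,y}}}$ and sending them through a quantum channel, the client calls $RSR_B$ via $\pi_A$ a total of $n\times m$ times, collecting the angles $\theta_{x,y}$ returned at interface $\A$ and using them in the UBQC preparation and interaction stages of Protocol~\ref{protcol:UBQC} exactly as before. On the server side, the converter $\pi_B$ feeds the state $\ket{+}$ into $RSR_B$ at interface $\B$ and receives back the rotated state $\ket{+_{\theta_{x,y}}}$, which the server entangles into a brickwork state and measures according to the $\delta_{x,y}$ supplied by the client. Correctness is immediate in the honest case: the joint output of $(\pi_A, \pi_B)$ with $RSR_B$ matches exactly the input distribution UBQC requires, and then Theorem~\ref{thm:MRSPDQC} handles the remainder.

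For security, I would construct the overall simulator by composing the simulator $\sigma_B$ from the proof of Theorem~\ref{thm:indistinguishability} (which uses a CNOT-based gadget to emulate $RSR_B$ given $MRSP_B$) with the UBQC simulator from Theorem~\ref{thm:MRSPDQC}. A standard hybrid argument in the AC framework then shows that any distinguisher separating $\pi_{\mathrm{UBQC}} \cdot \pi_A \cdot RSR_B$ from $\mathcal{S}^{blind} \cdot \sigma$ would induce a distinguisher against one of the two underlying constructions, each of which is perfectly indistinguishable. Hence the composed bound is $0$, giving perfect blindness.

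The step that requires the most care, although it is not really an obstacle, is verifying that the composition is well-typed and that no interface leakage is introduced when chaining the converters: $\pi_B$ must present exactly the interface that the UBQC server-side converter expects, and the $RSR_B$ calls must be issued in the order consistent with the brickwork preparation. Because UBQC uses the $\ket{+_{\theta_{x,y}}}$ states independently during the preparation phase, the calls to $RSR_B$ commute and the ordering concern is vacuous, so the composition goes through verbatim and yields the claimed construction of DQC with perfect blindness.
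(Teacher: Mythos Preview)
Your proposal is correct and takes essentially the same approach as the paper: the paper treats this theorem as an immediate corollary, stating only that combining the perfect construction of $MRSP_B$ from $RSR_B$ (Theorem~\ref{thm:indistinguishability}) with Theorem~\ref{thm:MRSPDQC} yields the result. Your write-up simply makes explicit the composability step in the AC framework and the form of the composed protocol and simulator, which the paper leaves implicit.
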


\section{Specification of the QEnclave}
\label{sec:QEnclave}
In this section, we give a complete specification of the QEnclave, based on the use of a secure processor. The QEnclave implements the ideal functionality RSR, and uses the enclave and its hardware assumption to ensure the security of the construction.
Moreover, it communicates with the client classically and returns a quantum state to the server, as
shown in Figure~\ref{fig:SS}.
For convenience, we assume that the client can choose the input angles uniformly at random,
rather than letting the QEnclave choose them (as in $RSR_B$). 
This transformation does not change the security 
since in our setup, both the client and QEnclave are expected to be honest.
\begin{figure}[h!]
        \centering
        \includegraphics[width=10cm]{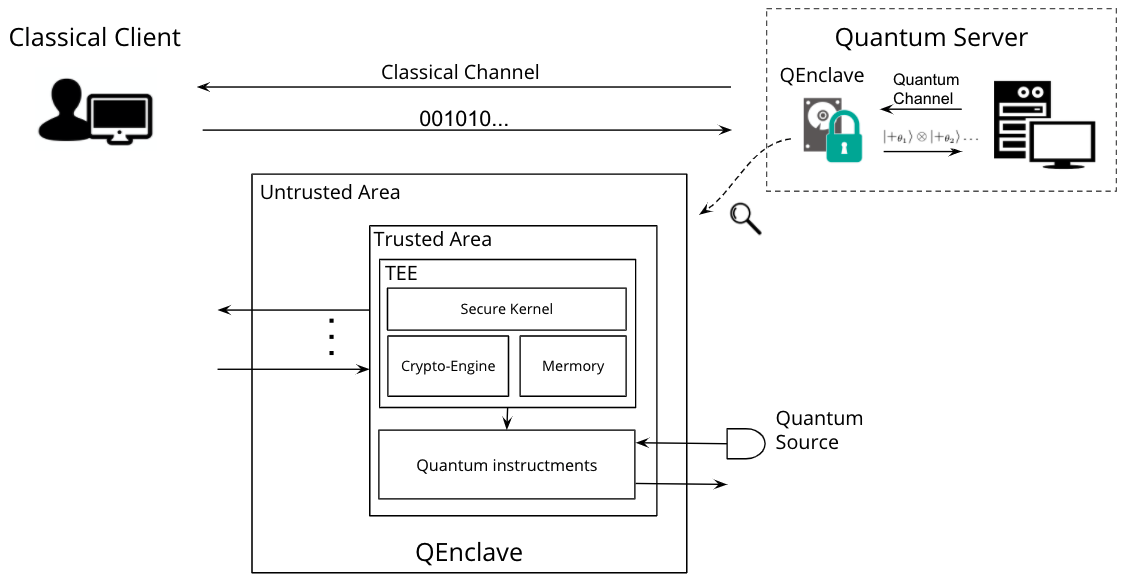}
        \caption{Specification of QEnclave}
        \label{fig:SS} 
\end{figure}

To start with, we demonstrate how attested execution functionality $\G_\att$ securely constructs outsourcing computation protocol under composition. By considering a simple 2-party outsourcing computation $\F_{\text{outsrc}}[\C,\Ss]$ with target function $y=f(x)$, where the client $\C$ outsources $f$ and $x$ with encoding and finally obtains the output $y$ while the server $\Ss$ or any other adversary only knows the size of inputs and outputs $(\abs{f+x}, \abs{y})$ during the computation process. A $\G_\att$-hybrid protocol $\texttt{Prot}_\text{outsrc}$ (Protocol \ref{protocol:outsrc}) is given in~\cite{coron_formal_2017} and proven to UC-realized $\F_{\text{outsrc}}$ when $\C$ is honest and $\Ss$ is a static adversary (See Appendix \ref{sec:review}). The probabilistic polynomial time indistinguishability of ideal-world and real-world executions is reduced to the Decisional Diffie-Hellman (DDH) assumption for secure key exchange~\cite{Maurer2013KeyEW} and authenticated encryption. The indistinguishability is also equivalent under AC framework without instantiating a DDH-based secure key exchange protocol but a random secret key shared between $\C$ and $\G_\att$. Furthermore, since the composable security of $RSR$, the construction of QEnclave with secure processor's functionality $\G_\att$ is theoretically feasible.




In practice, a quantum-safe digital signature scheme~\cite{akleylek_efficient_2016,yang_xmss_2011} is necessary for the remote attestation scheme since we assume that the quantum server is potentially malicious. Meanwhile, more practical remote attestation schemes provide post-quantum security~\cite{liu_remote_2018}.

The confidentiality consists in hiding the rotation angles chosen by the client. The requirement of using a quantum-safe encryption makes symmetric schemes more appropriate than asymmetric ones for this task. Apart from a key exchange protocol based on DDH, there are also other key encapsulation mechanism (KEM) schemes~\cite{baldi_ledakem_2018, bindel_hybrid_2018,wang_post-quantum_2020} theoretically feasible to share a secret key between the client and the QEnclave. 

Once the secure channel is established  between the client and QEnclave, the client can send the encrypted rotation angles to the QEnclave. QEnclave decrypts them and encodes the initial quantum state from the external source using the classical angles chosen by the client. At this stage, we assume that the trusted area which includes the secure processor also includes the quantum apparatus and protects the interactions between the two from external eavesdropping. This leads to a remote state preparation protocol for delegated quantum computation with blindness using the QEnclave and classical communication between the client and the server.
We summarize all the Steps in Protocol~\ref{protcol:actual}.
\begin{protocol}
\caption{QEnclave-based RSP Protocol for Blindness with $\prog_\text{rsr}$}
\vspace{.2cm}
$\G_\att$-enabled QEnclave Program $\prog_\text{rsr}$:
\vspace{.2cm}

\quad On input $(\text{``keyex''},g^a):$ 

\quad\quad let $b\in\mathbb{Z}_p$, store $\text{sk}:=(g^a)^b$; return $(g^a,g^b)$

\quad On input* $(\text{``compute''},ct)$:

\quad\quad let $(f,x):=\text{AE.Dec}_{\text{sk}}(ct)$
    
\quad\quad assert decryption success, $ct$ not seen before
    
\quad\quad let $\theta_0\dots\theta_n:=f(x)$ and return $ct_\out:=\text{AE.Enc}_{\text{sk}}(\theta_0\dots\theta_n)$,

\quad\quad while $\theta_0\dots\theta_n$ is applied to quantum states from external quantum source:
\begin{align*}
        \ket{e}=Z_{1}(\theta_{1})\otimes\ldots\otimes Z_{n}(\theta_{n})
        \ket{+}^{\otimes n}
\end{align*}
\vspace{.2cm}
Server $\Ss$:
\vspace{.2cm}

\quad On receive $(\text{``keyex''},g^a)$ from $\C$:
    
\quad\quad let $\emph{eid}:=\G_\att.\install(\emph{sid},\prog_\text{rsr})$
    
\quad\quad let $((g^a,g^b),\sigma):=\G_\att.\resume(\emph{eid},(\text{``keyex''},g^a))$ and send $(\emph{eid},g^b,\sigma)$ to $\C$
    
\quad On receive* $(\text{``compute''},ct)$ from $\C$:
    
\quad\quad let $(ct_\out,\sigma):=\G_\att.\resume(\emph{eid},(\text{``compute''},ct)$ and send $ct_\out$ to $\C$

\vspace{.2cm}
Client $\C$:
\vspace{.2cm}

\quad On initialize:
    
\quad\quad let $a\in \mathbb{Z}_p$,$\mpk:=\G_\att.\getpk()$
    
\quad\quad send $(\text{``keyex''},g^a)$ to $\Ss$, await $(\emph{eid}, g^b, \sigma)$ from $\Ss$
    
\quad\quad assert $\Sigma.\text{Vf}_\mpk((\emph{sid},\emph{eid},\texttt{Prot}_\text{rsr},(g^a,g^b)),\sigma)$
    
\quad\quad let $\text{sk}:=(g^b)^a$
    
\quad On receive* $(\text{``compute''},f,x)$:
    
\quad\quad let $ct:=\text{AE.Enc}_\text{sk}(f,x)$ and send $(\text{``compute''},ct)$ to $\Ss$, await $ct_\out$
    
\quad\quad let $y:=\text{AE.Dec}_\text{sk}(ct_\out)$ and assert decryption success and $ct_\out$ not seen before output $y$

\label{protcol:actual}
\end{protocol}

Notice that this protocol assumes a reliable quantum apparatus inside the trusted area of QEnclave for transforming the incoming quantum state. Moreover, we assume that the communication between the secure processor and the quantum apparatus remains hidden to the server.
While this assumption may seem strong, the idea of sealing hardware components into a tamper-proof box is already well spread in the world of hardware security. In particular,
the FIPS-140 certification for Hardware Security Modules (HSM) includes criteria
for physical tamper-evidence (level 2 certification), physical tamper-resistance (level 3) or even robustness against environmental attacks (level 4).

While our proposal of QEnclave implements the RSR ideal functionality, we have left aside a number of potential attacks that stem from the physical realization.
Implicitly, we assume that QEnclave is fabricated correctly by a certified manufacturer, which ensures that an adversary cannot subvert the device before it was installed on the server. Besides, we also exclude some hardware-dependent attacks in our work \emph{e.g.}, specific side-channel attacks on specific enclave products. Finally, we have not yet considered the possibility of counterfeiting the QEnclave. 

\section{Conclusion and Discussion}
\label{sec:concl}
We introduced a new functionality called \emph{Remote State Rotation}, that can be used to achieve secure delegated quantum computing in a practical way compatible with currently available quantum hardware platform in the cloud. Moreover, we have proposed a realistic hardware assumption of trustworthy quantum operations with classical secrets to circumvent the impossibility results of~\cite{Aaronson2019ComplexityTheoreticLO,badertscher_security_2020} of implementing a composable RSR with classical channel only. Our proposed ideal functionality with simple rotations lowers the minimal requirement on the operations of the client while keeping minimal overhead on the server side too. Finally, we gave a full specification of QEnclave that implements the RSR functionality using a secure processor to control the quantum apparatus required for the blindness of delegated quantum computation.

Beside privacy, another desirable property of delegated quantum computing is verifiability. In general, a DQC protocol is verifiable if the client can verify the result from the server (See Definition~\ref{defn: VBQC}).
A verifiable universal blind quantum computing protocol was proposed in~\cite{fitzsimons_unconditionally_2017} where the client could inserts in the target computation a set of trap qubits that are isolated from the computation. This construction ensures that the measurement results of trap qubits is always deterministic and can be used as a test of the correctness of the entire computation as they are known only by the client. 

Adapting the same approach for RSR is not trivial as a malicious server controlling the source, is now enabled to perform correlated attack before and after the call to RSR. Hence the proof technique from~\cite{fitzsimons_unconditionally_2017} does not directly apply. In principle, such deviations can be chosen to affect certain types of computation qubits but leave trap qubits unchanged, then change the execution of the protocol but remain unnoticed by the client at the same time, which means the protocol is not verifiable. However there exist many other approaches to verifiability such as the ones based on self-testing that might prove more suitable for RSR. We leave this question open for the future work. It is worth mentioning that one could trivially add a trusted measurement device or a trusted source to the construction of the QEnclave to remove the possibility of such correlated attack implementing directly the RSP resource instead. This will define directly an efficient classical client verifiable delegated computing protocol with extended hardware assumption addressing the current challenge of demonstrating certifiable quantum supremacy. However we believe keeping the QEnclave construction as simple as possible is a more interesting option to be explored. 

Finally one can explore usability of QEnclave for any quantum protocols that can be implemented through RSR. In particular, we think it can be relevant to use it to for a practical implementation of semi-quantum money schemes~\cite{radian_semi-quantum_2019} that unlike standard quantum money protocol~\cite{wiesner_conjugate_1983} considers that the bank mints the quantum states used as banknotes on the user's side and verifies their validity using only classical interactions. This matches our definition of remote state preparation, once the problem of verifiability is also addressed. Then using an QEnclave, a bank might be able to authenticate the banknote by remotely performing quantum operations but using only classical communication.

\newpage
\bibliographystyle{plainurl}
\bibliography{abbrev3,crypto,biblio}

\newpage
\appendix
\section{$\G_\att$-hybrid Protocol with $\prog_\text{outsrc}$}
\label{sec:review}
In this section, we review the composable security proof while using $\G_\att$ as hardware assumption to construct a outsourcing protocol \ref{protocol:outsrc} as real world execution in compared with the ideal functionality $\F_{\text{outsrc}}$. In the protocol, the server equips with a secure processor with $\G_\att$ functionality and the enclave program $\prog_\text{outsrc}$ running within it. 
\begin{protocol}[!h]
\caption{$\G_\att$-based Outsourcing Protocol $\texttt{Prot}_\text{outsrc}[\emph{sid},\C,\Ss]$~\cite{coron_formal_2017}}

\vspace{.2cm}
$\G_\att$-enabled Program $\prog_\text{outsrc}$:
\vspace{.2cm}

\quad On input $(\text{``keyex''},g^a):$ 

\quad\quad let $b\in\mathbb{Z}_p$, store $\text{sk}:=(g^a)^b$; return $(g^a,g^b)$

\quad On input* $(\text{``compute''},ct)$:

\quad\quad let $(f,x):=\text{AE.Dec}_{\text{sk}}(ct)$
    
\quad\quad assert decryption success, $ct$ not seen before
    
\quad\quad let $y:=f(x)$ and return $ct_\out:=\text{AE.Enc}_{\text{sk}}(y)$

\vspace{.2cm}
Server $\Ss$:
\vspace{.2cm}

\quad On receive $(\text{``keyex''},g^a)$ from $\C$:
    
\quad\quad let $\emph{eid}:=\G_\att.\install(\emph{sid},\prog_\text{outsrc})$
    
\quad\quad let $((g^a,g^b),\sigma):=\G_\att.\resume(\emph{eid},(\text{``keyex''},g^a))$ and send $(\emph{eid},g^b,\sigma)$ to $\C$
    
\quad On receive* $(\text{``compute''},ct)$ from $\C$:
    
\quad\quad let $(ct_\out,\sigma):=\G_\att.\resume(\emph{eid},(\text{``compute''},ct)$ and send $ct_\out$ to $\C$

\vspace{.2cm}
Client $\C$:
\vspace{.2cm}

\quad On initialize:
    
\quad\quad let $a\in \mathbb{Z}_p$,$\mpk:=\G_\att.\getpk()$
    
\quad\quad send $(\text{``keyex''},g^a)$ to $\Ss$, await $(\emph{eid}, g^b, \sigma)$ from $\Ss$
    
\quad\quad assert $\Sigma.\text{Vf}_\mpk((\emph{sid},\emph{eid},\texttt{Prot}_\text{outsrc},(g^a,g^b)),\sigma)$
    
\quad\quad let $\text{sk}:=(g^b)^a$
    
\quad On receive* $(\text{``compute''},f,x)$ from environment $\Z$:
    
\quad\quad let $ct:=\text{AE.Enc}_\text{sk}(f,x)$ and send $(\text{``compute''},ct)$ to $\Ss$, await $ct_\out$
    
\quad\quad let $y:=\text{AE.Dec}_\text{sk}(ct_\out)$ and assert decryption success and $ct_\out$ not seen before output $y$
\label{protocol:outsrc}
\end{protocol}
\begin{theorem}[Secure outsourcing from $\G_\att$~\cite{coron_formal_2017}]
Assume that the signature scheme $\Sigma$ is existentially unforgeable under chosen message attacks, the Decisional Diffie-Hellman assumption holds in the algebraic group adopted, the authenticated encryption scheme AE is perfectly correct and satisfies the standard notions of INT-CTXT and semantic security. Then, the $\G_\att$-hybrid protocol $\texttt{Prot}_\text{outsrc}$ UC-realizes $\F_{\text{outsrc}}$  when the client $\C$ is honest, and the server is a static, malicious adversary. 
\label{thm:uc_outsrc}
\end{theorem}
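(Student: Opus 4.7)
The plan is to follow the standard UC simulation argument: exhibit an ideal-world simulator $\Ss^\ast$ that, while interacting only with $\F_{\text{outsrc}}$ and the leakage $(|f|+|x|,|y|)$, produces a view for the environment $\Z$ that is computationally indistinguishable from the real execution of $\texttt{Prot}_\text{outsrc}$ in the $\G_\att$-hybrid world. The simulator internally runs a copy of the corrupt server and of $\G_\att$, handles the key-exchange message $g^a$ from the honest client on behalf of the real functionality (picking its own $b$ and issuing the attestation through the simulated $\G_\att$), and, upon each $\F_{\text{outsrc}}$ activation with leakage $(|f|+|x|,|y|)$, sends the server a fresh AE ciphertext of a dummy plaintext of the appropriate length. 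When the corrupt server returns a reply $ct_\out$, $\Ss^\ast$ first checks whether $ct_\out$ was honestly produced by the simulated enclave on the expected computation; if so, it delivers the output at the client's interface of $\F_{\text{outsrc}}$, otherwise it instructs $\F_{\text{outsrc}}$ to abort (which matches the real client's rejection upon AE failure or replay).

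I would then show indistinguishability by a sequence of hybrids. Hybrid $H_0$ is the real world. In $H_1$ every check $\Sigma.\text{Vf}_\mpk$ performed by the honest client is replaced by a direct lookup into the signing log of $\G_\att$; the gap is bounded by the EUF-CMA advantage against $\Sigma$, since any difference requires the adversary to produce a valid signature on a tuple that was never signed by $\G_\att$. In $H_2$ the value $\text{sk}:=(g^a)^b$ that is actually used by both the honest client and the enclave program is replaced by a uniformly random key $k\in\{0,1\}^\lambda$; the gap is bounded by the DDH advantage in the underlying group, exploiting that after $H_1$ the only adversarial access to $\text{sk}$ is through AE ciphertexts, so a standard DDH-to-key-indistinguishability reduction applies. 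In $H_3$ every honestly generated AE ciphertext $ct=\text{AE.Enc}_k(f,x)$ is replaced by $\text{AE.Enc}_k(0^{|f|+|x|})$, and likewise for $ct_\out$ on the enclave side; the gap is bounded by semantic security of AE under the now-uniform key $k$. In $H_4$ the client's decryption step is modified so that any $ct_\out$ not appearing in the log of ciphertexts produced by the simulated enclave is rejected instead of decrypted; the gap here is bounded by INT-CTXT, because any non-rejecting decryption of such a ciphertext constitutes a successful ciphertext forgery. After $H_4$, the only information the adversary extracts from the transcript is $(|f|+|x|,|y|)$ together with an abort/no-abort decision that is determined by whether the enclave correctly answered, which is exactly the interface offered by $\Ss^\ast$ to $\F_{\text{outsrc}}$, so $H_4$ is identically distributed to the ideal execution.

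Correctness (the no-corruption case) follows directly from perfect correctness of AE, the unforgeability-based check against $\G_\att$'s log, and the deterministic behaviour of the enclave program $\prog_\text{outsrc}$: whenever both parties are honest the client recovers $y=f(x)$ with probability one. To conclude, one invokes the UC composition theorem to lift the above stand-alone argument to arbitrary compositions; since $\G_\att$ itself is an ideal functionality in the hybrid model, the reduction produces no environment-dependent state and standard UC bookkeeping suffices.

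I expect the main obstacle to be hybrid $H_1$, where one must be careful that the reduction to EUF-CMA correctly embeds the signing oracle: the reduction runs the entire experiment, answering all honest $\G_\att.\resume$ queries by calling its signing oracle on $(\emph{sid},\emph{eid},\texttt{Prot}_\text{outsrc},\cdot)$, and must argue that any distinguishing event yields a signed message outside this oracle-query set. Subtleties include handling concurrent enclave instances with distinct $\emph{eid}$'s (so that forgeries on one session are not confused with honest signatures on another), and ensuring that the anonymous-attestation variant of $\G_\att$ still binds the signed payload to $(\emph{sid},\emph{eid},\prog)$, which is what prevents a corrupt server from swapping in a differently-programmed enclave. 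Once this binding is in place, the remaining hybrids are the textbook DDH, semantic-security, and INT-CTXT reductions.
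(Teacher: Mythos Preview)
Your overall plan—exhibit a simulator and bridge real and ideal via hybrids keyed to EUF-CMA, DDH, semantic security, and INT-CTXT—is exactly the paper's approach. The paper walks its chain in the opposite direction (from the simulated execution toward the real one) and applies DDH \emph{twice}, once to move to a random key before the plaintext swap and once to move back afterward, but the ingredients are the same.

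There is, however, a genuine gap at your step $H_2\to H_3$. Through $H_2$ the honest client still computes its output as $\text{AE.Dec}_k(ct_\out)$; in $H_3$ you replace the enclave-produced $ct_\out$ by $\text{AE.Enc}_k(0^{|y|})$, so an honest run now yields client output $0^{|y|}$ instead of $y$. Since the environment $\Z$ observes the client's output, $H_2$ and $H_3$ are trivially separated, and the semantic-security reduction cannot absorb this difference: the reduction does not hold $k$ and hence cannot emulate the client's decryption in either hybrid. Your $H_4$ does not repair the problem either, because it only alters behaviour on \emph{unlogged} ciphertexts; on logged ones the client still decrypts and obtains $0^{|y|}$, so $H_4$ is not the ideal execution. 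The paper sidesteps this by anchoring every intermediate hybrid at the ideal end, where the client's output is delivered by $\F_{\text{outsrc}}$ (which computes $y=f(x)$ directly) rather than by decryption; the final identification of its last hybrid with the real world then relies on perfect correctness of AE together with the already-established negligibility of the forgery event. If you prefer the real-to-ideal direction, you must, \emph{before} the IND-CPA swap, insert a hybrid that replaces the client's decryption by a lookup into the enclave's output log and, for logged ciphertexts, substitutes the decrypted value by the directly computed $y=f(x)$; that step is justified by INT-CTXT (for unlogged ciphertexts) plus perfect correctness (for logged ones), and only after it does the semantic-security hybrid go through.
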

\begin{proof}
For the case that both $\C$ and $\Ss$ are honest is trivial since all communication between $\C$ and $\Ss$ is assumed to occur over secure channel; For the case of honest client and corrupted server, an ideal world simulator $\texttt{Sim}$ that makes no probabilistic polynomial time indistinguishability of ideal-world and real-world executions can be described as follows:
\begin{itemize}
    \item The simulator $\texttt{Sim}$ forwards any communication between $\G_\att$ and adversary (the corrupted $\Ss$) or between $\C$ and $\Ss$.
    \item $\texttt{Sim}$ starts by emulating the setup of a secure channel between $\C$ and $\G_\att$. $\texttt{Sim}$ sends $(\text{``keyex''},g^a)$ to $\Ss$ for a randomly chosen $a$.
    \item When $\texttt{Sim}$ receives a tuple $(\emph{eid},g^b,\sigma)$, $\texttt{Sim}$ aborts outputting $\texttt{sig-failure}$ if $\sigma$ would be validated by a honest $C$, while $\texttt{Sim}$ has not recorded the following communication between $\G_\att$ and $\Ss$:
    \begin{itemize}
        \item $\emph{eid}:=\G_\att.\texttt{install}(\emph{sid},\prog_\text{outsrc})$;
        \item $((g^a,g^b),\sigma):=\G_\att.\resume(\emph{eid},(\text{``keyex''},g^a))$
    \end{itemize}
    Else, $\texttt{Sim}$ computes $sk=g^{ab}$
    \item When $\texttt{Sim}$ receives a message $(\abs{f+x}, \abs{y})$ from $\F_{\text{outsrc}}$, it proceeds as follows: $\texttt{Sim}$ sends $(\text{``compute''},ct=\text{AE.Enc}_\text{sk}(f_0,x_0))$ to $\Ss$ where $f_0$ and $x_0$ are some canonical function and input. 
    \item Then, $\texttt{Sim}$ waits to receive $ct_{\out}$ from $\Ss$. If $ct_{\out}$ was not the result of a previous $\G_\att.\resume(\emph{eid},(\text{``compute''},ct)$ call but $ct_\out$ successfully decrypts under $\texttt{sk}$, the simulator aborts outputting $\texttt{authenc-failure}$. Otherwise, $\texttt{Sim}$ allows $\F_{\text{outsrc}}$ to deliver $y$ to $\C$ in the ideal world.
\end{itemize}
The indistinguishability of the ideal world and real world execution can be proven within multiple steps of hybrids:
\begin{claim}
Assume that the signature scheme $\Sigma$ is secure, except with negligible probability, the simulated execution does not abort outputting $\texttt{sig-failure}$.
\end{claim}
\begin{proof}
Straightforward reduction to the security of the digital signature scheme $\Sigma$.
\end{proof}
\begin{hybrid}
Identical to the simulated execution, but the secret key $\texttt{sk}=g^{ab}$ shared between $\C$ and $\G_\att$ is replaced with a random element from the appropriate domain.
\end{hybrid}
\begin{claim}
Assume that the DDH assumption holds, then Hybrid 1 is computationally indistinguishable from the simulated execution.
\end{claim}
\begin{proof}
Straightforward by reduction to the DDH assumption.
\end{proof}
\begin{claim}
Assume that $\texttt{AE}$ satisfies INT-CTXT security. It holds that in Hybrid 1, \newline
$\texttt{authen-failure}$ does not happen except with negligible probability.
\end{claim}
\begin{proof}
Straightforward by reduction to the INT-CTXT security of authentication encryption. If $\Ss$ makes a $\G_\att.\resume(\emph{eid},(\text{``compute''},ct^\prime)$ call where $ct^\prime$ is not the ciphertext previously sent by $\texttt{Sim}$, either $ct^\prime$ is a previously seen ciphertext (causing $\prog_\text{outsrc}$ to abort, or the decryption of $ct^\prime$ in $\prog_\text{outsrc}$ fails with overwhelming probability).

Similarly, is the output $ct_\out$ sent by $\Ss$ to $\texttt{Sim}$ does not come from a correct \newline $\G_\att.\resume(\emph{eid},(\text{``compute''},ct)$ call, then either $ct_\out$ is a previously seen ciphertext, or $\C$'s decryption would fail with overwhelming probability.
\end{proof}
\begin{hybrid}
Instead of sending $ct=\text{AE.Enc}_\text{sk}(f_0,x_0))$ to $\Ss$, the simulator now sends $ct=\text{AE.Enc}_\text{sk}(f,x))$ where $f$ and $x$ are the honest client's true inputs.
\end{hybrid}
\begin{claim}
Assume that $\texttt{AE}$ is semantically secure, Hybrid 2 is computationally indistinguishable from Hybrid 1.
\end{claim}
\begin{proof}
Straightforward reduction to the semantic security of authenticated encryption.
\end{proof}
\begin{hybrid}
Now using real key $g^{ab}$ instead of using a random key between $\C$ and $\G_\att$.
\end{hybrid}
\begin{claim}
Assume that the DDH assumption holds, then Hybrid 3 is computationally indistinguishable from Hybrid 2.
\end{claim}
\begin{proof}
Straightforward by reduction to the DDH assumption.
\end{proof}
Finally, observe that conditioned on the simulator not aborting and $\texttt{AE}$ being perfectly correct, Hybrid 3 is identically distributed as the real execution.

\end{proof}

\end{document}